\documentclass[a4paper,USenglish,numberwithinsect]{lipics-v2021}
\hideLIPIcs
\nolinenumbers
\makeatletter
\let\@oddfoot\@empty
\makeatother
\usepackage{mathtools,amssymb}
\usepackage{tikz}
\definecolor{proofNavigationBlue}{RGB}{24,65,94}
\newcommand{\proofnavicon}[1]{%
  \tikz[baseline=-0.5ex,x=1pt,y=1pt]{%
    \fill[proofNavigationBlue] (0,0) circle[radius=4.8pt];
    \draw[white,line width=0.8pt,line cap=round,line join=round,rotate=#1]
      (0,2.3)--(0,-2.2) (-1.65,-0.55)--(0,-2.2)--(1.65,-0.55);}}
\makeatletter
\reversemarginpar
\@mparswitchfalse
\newcommand{\proofnavmargin}[2]{%
  \marginpar{\raggedleft\hyperref[#1]{\proofnavicon{#2}}\hspace*{3pt}}}
\makeatother
\newcommand{\prooflink}[1]{%
  \mbox{}\phantomsection\label{proof-statement:#1}%
  \proofnavmargin{appendix-proof:#1}{0}\ignorespaces}
\newcommand{\proofback}[1]{%
  \mbox{}\phantomsection\label{appendix-proof:#1}%
  \proofnavmargin{proof-statement:#1}{180}\ignorespaces}
\newcommand{\parhead}[1]{\par\medskip\noindent\textbf{#1}\nobreak\hspace{0.5em}\ignorespaces}
\newcommand{\N}{\mathbb N}\newcommand{\Z}{\mathbb Z}
\newcommand{\Q}{\mathbb Q}
\newcommand{\vc}[1]{\boldsymbol{#1}}
\newcommand{\Span}{\operatorname{span}}
\title{Rational polynomial solvability reduces to stratifiability of semilinear sets}
\titlerunning{Rational polynomial solvability and stratifiability}
\author{Michaël Cadilhac}{School of Computing, DePaul University, Chicago, IL, USA}{}{https://orcid.org/0000-0001-9828-9129}{}
\authorrunning{M. Cadilhac}
\Copyright{Michaël Cadilhac}
\ccsdesc[500]{Theory of computation~Formal languages and automata theory}
\keywords{Semilinear sets, stratifiability, rational polynomial solvability, context-free languages}
\AtBeginDocument{\hypersetup{pdfsubject={}}}
\begin{document}\maketitle
\begin{abstract}
  A semilinear set is stratifiable if it is a finite union of linear sets whose
  periods have at most two nonzero coordinates, with no two periods in one
  component connecting alternating pairs of coordinates. We give a computable
  many-one reduction from the existence of a rational zero of an integer
  polynomial to deciding stratifiability. After the classical reduction to
  quadratic systems, we reduce rational solvability of these systems to a
  problem we call \emph{finite separation by Cartesian products of rational
    linear forms}. Its input consists of rational subspaces
  $D_1,\ldots,D_k\le\Q^r\times\Q^r$. A \emph{separator} is a map
  $(\vc x,\vc y)\mapsto(u(\vc x),v(\vc y))$, where
  $u,v\colon\Q^r\to\Q$ are nonzero rational linear forms. The problem asks whether finitely many separators suffice so
  that every pair outside $\bigcup_iD_i$ receives, from some separator, a value
  attained at no pair in that union. We then reduce this problem to stratifiability.
  Ginsburg and Spanier's characterization of letter-bounded context-free
  languages \cite[Theorem~2.1]{GS66} also induces a reduction from rational
  polynomial solvability to context-freeness of letter-bounded languages
  specified by semilinear exponent sets. An OpenAI preprint claims
  undecidability of rational polynomial solvability; if correct, our result
  settles the decidability of general stratifiability, a problem open since
  1966.
\end{abstract}

\section{Introduction}
For a set $S\subseteq\N^d$ and distinct letters $a_1,\ldots,a_d$, consider
\[
 L(S)=\{a_1^{n_1}\cdots a_d^{n_d}:(n_1,\ldots,n_d)\in S\}.
\]
The coordinates record the lengths of successive letter blocks; a language contained in $a_1^*\cdots a_d^*$ is called \emph{letter-bounded}. Suppose that $S$ is semilinear: equivalently, it is definable in first-order arithmetic with addition, or Presburger arithmetic. Can we decide whether $L(S)$ is context-free?

Ginsburg and Spanier's characterization \cite[Theorem~2.1]{GS66} expresses context-freeness of $L(S)$ as a restriction on the periods in a presentation of $S$. A linear set consists of a base vector plus arbitrary nonnegative integer combinations of finitely many period vectors, and a semilinear presentation writes $S$ as a finite union of linear sets. Such a presentation is a \emph{stratification} if each period has at most two nonzero coordinates and no two periods in the same component connect alternating positions: for $i<j<k<l$, one period cannot be nonzero at $i,k$ while another is nonzero at $j,l$. A set admitting a stratification is \emph{stratifiable}. The characterization states that $L(S)$ is context-free if and only if $S$ is stratifiable. Our decision question is therefore whether a set given by an arbitrary semilinear presentation admits a stratification.

For example, the periods $(1,1,0)$ and $(0,1,1)$ stratify
$\{(n,n+m,m):n,m\in\N\}$, whose language is $\{a^n b^{n+m}c^m:n,m\in\N\}$.
In contrast, the semilinear set
\begin{equation}\label{eq:nonstratifiable-example}
 S_\times=\{(n,m,n,m):n,m\in\N\}
\end{equation}
is not stratifiable, since the language it induces is not context-free.

\begin{samepage}
\noindent
Our main contribution is the following:
\begin{theorem}[Main result]\label{thm:main}
There is an algorithm that, given a polynomial with integer coefficients, computes a finite semilinear presentation of a set $S$ such that $S$ is stratifiable if and only if the polynomial has a rational zero.
\end{theorem}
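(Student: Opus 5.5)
The reduction goes in three stages, following the outline in the abstract. Each stage is a computable many-one reduction that preserves yes-instances in both directions. Composing them proves the theorem.

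\textbf{Stage 1 (quadratic systems).} First, introduce fresh variables for intermediate products. This turns a polynomial $P\in\Z[\vc z]$ into a system of equations of the forms $z_i=c$, $z_i+z_j=z_k$ and $z_iz_j=z_k$. The system has a rational solution if and only if $P$ has a rational zero. This step is classical and introduces no new difficulty.

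\textbf{Stage 2 (finite separation).} Encode the system as subspaces $D_1,\dots,D_k\le\Q^r\times\Q^r$. A candidate solution should correspond to a pair $(\vc x,\vc y)$, and the bilinear constraints $z_iz_j=z_k$ should become the requirement that $\vc x$ and $\vc y$ behave like rank-one data, i.e.\ a matrix $\vc x\vc y^{\mathsf T}$-type object. The linear constraints are absorbed into the subspaces. The guiding intuition is this: a separator $(u,v)$ that avoids $\bigcup_i D_i$ in the required sense exists essentially only when $u$ and $v$ are forced to be the coordinates of a rational solution. Then finitely many separators suffice precisely when the system is rationally solvable. I would prove the two directions separately:
\begin{itemize}
\item From a rational solution, write down explicit finitely many separators, whose rational coefficients are built from the solution.
\item From a finite separating family, take the pairs outside $\bigcup_i D_i$ that lie on the lines $\Q(\vc x,\vc y)$ for generic $\vc x,\vc y$. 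By pigeonhole, one separator handles infinitely many of them. Its kernel then satisfies the algebraic constraints, which yields a rational point.
\end{itemize}
I expect this stage to be the main obstacle. Designing the $D_i$ so that "finitely many separators" is equivalent to "a rational point", rather than a real or algebraic point, needs care. The rationality of $u$ and $v$ is what must encode rationality of the solution.

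\textbf{Stage 3 (stratifiability).} From the $D_i$, build a semilinear set $S\subseteq\N^d$. Lay out the coordinates in two blocks, $\vc x$ then $\vc y$, so that "crossing" periods linking the $\vc x$-block to the $\vc y$-block play the role of $S_\times$ from~\eqref{eq:nonstratifiable-example}. Take $S$ to be the union of two parts:
\begin{itemize}
\item a non-stratifiable "full product" part, coming from lattice points of $\Q^r\times\Q^r$ cut out by sign patterns;
\item stratifiable pieces encoding the $D_i$.
\end{itemize}
Stratifiability of the whole set must then be equivalent to the complement of the $D_i$-part being coverable by finitely many linear sets whose periods are nested, i.e.\ non-crossing. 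Each such component pairs one coordinate group of $\vc x$ with one of $\vc y$ through a single linear form on each side; this is exactly a separator $(u,v)$.

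Soundness of Stage 3 uses Ginsburg--Spanier's characterization only as a definition of stratifiability. The delicate direction shows that any stratification of $S$ induces finitely many separators. For this I would project each linear component of a stratification onto the two blocks. Non-crossing forces each component to factor through one linear form per block, and these forms give the separators.

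Gadget coordinates, with periods such as $(1,1)$ placed in nested positions, are needed to realize arbitrary rational coefficients of $u$ and $v$ by nonnegative integer periods. Checking that these gadgets do not create spurious stratifications is the second technical hurdle.
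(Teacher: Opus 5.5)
\textbf{Stage 2.} You treat ``from a rational solution, write down separators'' as the routine direction, but it is the crux, and your plan has no mechanism for it. The root should be read off the \emph{forms} of a separator, not off a pair. With a diagonal subspace, multiplication subspaces $[p,x_jp]$ and an equation subspace, any separator of the single designated pair $\vc e=[0,1]$ normalizes to evaluation at a rational root (Lemma~\ref{lem:pointed-quadratic}). The same rigidity, however, leaves other outside pairs with no separator at all, whether or not a root exists: for $f=x^2-1$ the pair $[x,1]$ cannot be separated (Example~\ref{ex:pointed-insufficient}). The paper resolves this tension as follows.
\begin{itemize}
\item It adds a coordinate $b_{L,d}$ to each generator's second entry, so that $v$ can tilt each subspace image onto a chosen slope.
\item It adds a coordinate $t$ and the subspaces $H,K$. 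Together these force $u[t]\neq0$, and hence every $v[b_{L,d}]=0$, for any separator of $\vc e$, so the root requirement survives.
\item Every outside pair off three explicit lines, namely $\Q\vc e$ and two shifted lines, is then separated without a root. One root-evaluation separator, with a tuned parameter $\tau$, handles those lines.
\end{itemize}
Finiteness is also not what encodes rationality. By Lemma~\ref{lem:finite-subcollection}, any collection of separators has a finite subcollection separating the same pairs. Your pigeonhole step is therefore beside the point, and one designated pair suffices for soundness.

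\textbf{Stage 3.} Your claim that non-crossing forces each component to ``factor through one linear form per block'' is false for components with few periods. A singleton component, or one with a single two-coordinate period, is stratified. Yet for $r\ge2$ its period images span too little to contain any $\ker u\times\ker v$, which has dimension $2r-2$, so it yields no separator. The missing idea is forcing a fat component.
\begin{itemize}
\item The paper keeps \emph{every} representation $(\vc x^+-\vc x^-,\vc y^+-\vc y^-)$. Each fiber of $\mathsf{A}$ then contains a $2r$-dimensional grid of cancellation directions.
\item A Dickson-lemma counting argument (Lemma~\ref{lem:fibers}) gives a component meeting the fiber whose image-zero period combinations span $\ker\mathsf{A}$.
\item The order $x_1^+,\ldots,x_r^+,x_1^-,\ldots,x_r^-$ makes signed partners alternate. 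Non-crossing then collapses the forms vanishing on the period images to one form per block.
\item One must still show these forms separate. A failure gives $\vc w\in D_i+L_j$. Since $\Gamma_j$ is closed under negation, clearing denominators and subtracting inside the component's image puts a point of $D_i$ into $Y$, a contradiction.
\end{itemize}
A sign-pattern encoding gives each integer pair a unique representation, which destroys the grid. Writing $S$ as a union of a non-stratifiable product part and stratifiable pieces does not determine whether the union is stratifiable. In the other direction you must stratify regions cut out by conditions on two mixed-sign weighted sums. The paper does this through a context-free grammar (Lemma~\ref{lem:weighted}) followed by Theorem~\ref{thm:GS}, with no gadget coordinates. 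Your gadgets are unspecified, and adding them would change $S$.
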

\end{samepage}

To prove Theorem~\ref{thm:main}, we first apply the classical reduction from
polynomial equations to quadratic systems \cite{Davis}. It therefore
suffices to reduce rational solvability of finite quadratic systems to
stratifiability. We do this by two reductions through an intermediate problem.
Section~\ref{sec:overview} provides a detailed overview of the reductions.
Formal treatment starts with the preliminaries in Section~\ref{sec:prelim}.
Section~\ref{sec:transfer} then proves the first reduction and
Section~\ref{sec:geometry} the second. Section~\ref{sec:main-proof} combines
them with the standard reduction from polynomial equations to quadratic systems.
Margin arrows link to appendix proofs (\proofnavicon{0}) and back
to the main text (\proofnavicon{180}).

\parhead{Conditional undecidability consequence.}
An OpenAI preprint \cite[Theorem~1.1]{H10} claims that rational polynomial solvability is undecidable. As of 9 October 2026, we are not aware of a completed independent review of this proof. If its claim is correct, Theorem~\ref{thm:main} implies undecidability of general stratifiability. The reduction itself does not depend on that claim.

\parhead{Related work.}
Ginsburg and Spanier studied languages contained in $w_1^*\cdots w_k^*$ for fixed words $w_i$ in \cite{GS64}. Their later paper \cite[Theorems~1.3 and~2.1]{GS66} proves the effective equivalence between semilinear presentations and Presburger formulas and the stratified-period characterization used here. They decide stratifiability for a single linear set \cite[Theorem~2.2]{GS66} and explicitly leave the semilinear case open in 1966 \cite[p.~296]{GS66}. The problem also appears in Ginsburg's book \cite{Ginsburg}. Ibarra and Seki \cite{IS} relate stratifiability to equivalence questions for multitape automata and establish decidable cases. Leroux, Penelle, and Sutre \cite{LPS} prove coNP-completeness of stratifiability for sets of nonnegative integer solutions of a single finite system of linear inequalities. Our input class allows arbitrary finite unions of linear sets.

\section{Overview of the two reductions}\label{sec:overview}
We now consider a finite system $f_i(x_1,\ldots,x_n)=0$ of rational polynomial
equations of degree at most two. We reduce its rational solvability to the
following separation problem, then reduce that problem to stratifiability.

\parhead{The intermediate separation problem.}  The input is a finite family of
rational subspaces $D_1,\ldots,D_k\le\Q^r\times\Q^r$, given by spanning vectors. A
\emph{separator} is a product map $u\times v$ defined by nonzero rational linear
forms $u,v\colon\Q^r\to\Q$ by \((u\times v)(\vc x,\vc y)=(u(\vc x),v(\vc y)).\) It
\emph{separates} a pair $\vc w$ from the subspaces if
\[
 (u\times v)(\vc w)\notin\bigcup_i (u\times v)(D_i).
\]
The problem, \emph{finite separation by Cartesian products of rational linear
  forms}, asks whether there is a finite list of separators such that every
rational $\vc w\notin\bigcup_iD_i$ can be separated by one of those separators.

For example, take $D=\{((t,t),(t,s)):t,s\in\Q\}$. An outside pair $((x_1,x_2),(y_1,y_2))$ has $x_1\ne y_1$ or $x_2\ne y_1$. In the first case, the separator with $u_1(\vc x)=x_1$ and $v_1(\vc y)=y_1$ distinguishes it from $D$; in the second, use $u_2(\vc x)=x_2$ and $v_2(\vc y)=y_1$. Each separator sends $D$ to the diagonal, while at least one sends the outside pair off that diagonal. Thus $u_1\times v_1,u_2\times v_2$ are a finite separating family.
It actually suffices to construct a separator for each outside pair:
Lemma~\ref{lem:finite-subcollection} selects a finite subcollection separating
all of them, using an elementary basis argument.

\subsection{First reduction: from equations to finite separation}

\parhead{Recovering a root from one designated pair.}
We illustrate the construction with $f=x^2-1$. Let $[p]$ be the coefficient
vector of $p$ in the basis $1,x,x^2$, so $[f]=(-1,0,1)$, and write
$[p,q]=([p],[q])$ and $u[p]=u([p])$. We want separation of the \emph{designated
  pair} $\vc e=[0,1]$ by a separator \(u \times v\) to force a root. A diagonal subspace \(\Delta\) will make the
forms \(u\) and \(v\) agree after normalization, a multiplication subspace \(M\) will make their
common form be an evaluation, and an equation subspace \(E\) will make that evaluation vanish on $f$:
\[
 \begin{aligned}
 \Delta&=\Span\{[1,1],[x,x],[x^2,x^2]\},\\
 M&=\Span\{[1,x],[x,x^2]\},\qquad
 E=\Span\{[1,2],[f,f]\}.
 \end{aligned}
\]
If $u\times v$ separates $\vc e$, linearity gives its image $(0,v[1])$. Thus
$v[1]\ne0$, since every subspace image contains zero; also $u[1]\ne0$, since
otherwise $[1,1] \in \Delta$ has the same image as $\vc e$. Divide the forms by these
two values so that \(u[1]=v[1]=1\). The image of $\Delta$ contains $(1,1)$ and avoids
$(0,1)$, so it is the diagonal and the normalized forms agree. Denote their
common form by $\lambda$.

Put $c=\lambda[x]$. The two generator images of $M$, namely $(1,c)$ and
$(c,\lambda[x^2])$, must be collinear. Hence $\lambda[x^2]=c^2$, making
$\lambda[p]=p(c)$ for every polynomial \(p\) of degree at most two. Thus \(\lambda\) is evaluation at \(c\). The images $(1,2)$ and $(f(c),f(c))$ of the
generators of $E$ span the plane unless $f(c)=0$, so separation forces a
root. For $f=x^2-2$, separation would require $c^2=2$ and is therefore
impossible over $\Q$.

\parhead{Assigning image slopes to separate $[x,1]$.}
Finite separation requires that every outside pair is separated. Yet even for
$f=x^2-1$, the pair $[x,1]$ lies outside $\Delta,M,E$ and cannot be separated
(Example~\ref{ex:pointed-insufficient}). We therefore modify the subspaces to
permit its separation, while retaining the root requirement at $\vc e$.

Choose nonzero rational linear forms $u,v$ on the coefficient space of $\Q[x]_{\le2}$ and write $(u\times v)[x,1]=(\rho,\sigma)$. If $(\rho,\sigma)\ne(0,0)$, we can choose a rational slope $s_L$ for each $L\in\{\Delta,M,E\}$ with $\sigma\ne s_L\rho$. Placing each modified subspace image in its chosen line would separate $[x,1]$, provided the image of $[x,1]$ stays $(\rho,\sigma)$. To do this, we give $v$ an independently adjustable coefficient for every generator.

Number the generators of each $L$ in the listed order, starting at $d=0$, and
write them as $[p_{L,d},q_{L,d}]$. Introduce a distinct polynomial variable
$b_{L,d}$ for each index $(L,d)$ and append these monomials to the coefficient
basis. Thus $[b_{L,d}]$ is a new unit vector; coefficient vectors of polynomials in $\Q[x]_{\le2}$ acquire zero coefficients at every $b_{L,d}$. Replace each generator by
$[p_{L,d},q_{L,d}+b_{L,d}]$, and call their span $\widehat{L}$. For example,
$\widehat{E} = \Span\{[1,2+b_{E,0}], [f, f+b_{E,1}]\}$.  Extend $v$ by assigning, for every $L,d$,
\[
 v[b_{L,d}]=s_Lu[p_{L,d}]-v[q_{L,d}].
\]
These are values on independent new basis vectors; the values $u[b_{L,d}]$ can still be chosen independently. Every modified generator has image $(u[p_{L,d}],s_Lu[p_{L,d}])$, so $\widehat{L}$ maps into the line of slope $s_L$. Since $[x,1]$ has zero coefficients at all added variables, its image stays $(\rho,\sigma)$ and avoids these lines. For example, choose $u[1]=1$, $u[x]=2$, $u[x^2]=0$, let $v$ extract the constant coefficient on $\Q[x]_{\le2}$, and take every $s_L=3$. The image of $[x,1]$ is then $(2,1)$, off every chosen line.

\parhead{Preserving the root requirement at $\vc e$.}
The freedom to choose image slopes must preserve the implication from
separation of $\vc e$ to a root. We therefore introduce one last variable $t$,
giving the coefficient basis $1,x,x^2,t,(b_{L,d})$, and add
\[
 H=\Span\{[t,1]\},\qquad
 K=\Span\bigl(\{[t,0]\}\cup\{[b_{L,d},b_{L,d}]:L,d\}\bigr).
\]
The pair $\vc e$ still lies outside every subspace. A separator of $\vc e$ must
have $u[t]\ne0$, since otherwise the generator of $H$ has the same image
$(0,v[1])$ as $\vc e$. The image of $K$ then contains the nonzero horizontal
vector $(u[t],0)$. Any nonzero value $v[b_{L,d}]$ would add a nonhorizontal
vector and make that image the whole plane. Thus $v[b_{L,d}]=0$ for every
$L,d$. Each modified generator has the same image as
$[p_{L,d},q_{L,d}]$, so the restrictions of $u,v$ to $\Q[x]_{\le2}$ separate
$\vc e$ from $\Delta,M,E$ and yield a root by the earlier calculation.

Conversely, at a rational root $c$, let both forms evaluate polynomials in
$\Q[x]_{\le2}$ at $c$:
\begin{equation}\label{eq:overview-root-evaluation}
 u[p]=v[p]=p(c)\qquad(p\in\Q[x]_{\le2}).
\end{equation}
Extend them by setting
\begin{equation}\label{eq:overview-root-extension}
 u[b_{L,d}]=v[b_{L,d}]=v[t]=0\quad\text{for every }L,d,
 \qquad u[t]=\tau>0.
\end{equation}
For every positive rational $\tau$, the images of $\widehat\Delta,\widehat M,
\widehat E$ have finite slopes $1,c,2$, the image of $K$ is horizontal, and
$H$ maps to $\Q(\tau,1)$. All avoid the nonzero vertical image of every
nonzero multiple of $\vc e$.

For the earlier separator of $[x,1]$, instead complete the assignment with
\[
 u[t]=v[t]=0,\qquad
 u[b_{L,d}]=v[b_{L,d}]\quad\text{for every }L,d.
\]
Now $H$ maps to the vertical axis and $K$ into the diagonal. The image
$(2,1)$ avoids both and the slope-$3$ images of the $\widehat L$, so
$u\times v$ separates $[x,1]$ from the enlarged family.

More generally, the same extension rule separates every $[p,q]\notin\Q\vc e$
with $p,q\in\Q[x]_{\le2}$, choosing the forms and slopes afresh for each pair.
If $p\ne0$, choose $v[1]=1$ and $u[p]\notin\{0,v[q]\}$, then choose every
$s_L\ne v[q]/u[p]$: the image avoids the vertical axis, the diagonal, and
the chosen slope lines. If $p=0$ and $q$ is nonconstant, choose $u[1]=1$,
$v[1]=0$, and $v[q]\ne0$, then recompute all added values by the same rule.
Now $H$ maps to zero, while the nonzero vertical image $(0,v[q])$ avoids
the diagonal containing the image of $K$ and the finite-slope images of
all $\widehat L$. Together with the root-based separators for the nonzero
multiples of $\vc e$, this separates every nonzero pair $[p,q]$ with
$p,q\in\Q[x]_{\le2}$ from the enlarged family. Among these pairs, a root is
required only for the nonzero multiples of $\vc e$.

\parhead{Separating outside pairs with added variables.}
We now seek separators for the outside pairs with a nonzero coefficient at
$t$ or some $b_{L,d}$ in at least one entry. We first construct separators
that do not require a root. Every constructed subspace has zero coefficient
of $t$ in its second entry. If an outside pair's second entry has a nonzero
coefficient of $t$, let $v$ extract that coefficient and take any nonzero
$u$. Its image is off the horizontal axis containing every subspace image,
so $u\times v$ separates the pair from all the constructed subspaces.

For pairs whose second entry has zero coefficient of $t$, changing the
slopes can change the pair's image through its $b_{L,d}$ terms. We choose
the values of $u,v$ on $1,x,x^2$ jointly with the slopes, extend $v$ by the
earlier formula, and set
$u[b_{L,d}]=v[b_{L,d}]$ and $u[t]=v[t]=0$. Setting $u[t]=0$ permits nonzero
$v[b_{L,d}]$: otherwise the corresponding generator of $K$, together with
$[t,0]$, would have images spanning the plane. With $v[1]=1$, the image of
$H$ is vertical and that of $K$ lies in the diagonal, so an image
$(\rho,\sigma)$ outside these and the chosen slope lines gives a separator.
For such a pair with first entry $[\alpha t]$ whose full second entry is not a
scalar multiple of $[1]$, choose new values on $1,x,x^2$ and new slopes,
subject to $u[1]=1$ and $v[1]=0$, and recompute all added values by the
extension formula. Some such choice gives the pair a nonzero vertical image,
while $H$ maps to zero and the other subspace images have finite slopes.

These slope constructions separate every outside pair whose second entry
has zero coefficient of $t$, except possibly pairs $[\alpha t,\beta]$
and pairs $\vc g+\alpha[t,0]$ with $\vc g\in\widehat L$.

For $[\alpha t,\beta]$, being outside $K$ and $H$ gives
$\beta\ne0$ and $\beta\ne\alpha$. If $\alpha=0$, the pair is a nonzero
multiple of $\vc e$, already separated under the root hypothesis.
If $\alpha\ne0$, let $u$ extract the coefficient of $t$ and let $v$
extract the constant coefficient. The image $(\alpha,\beta)$ avoids
the vertical images of the $\widehat L$, the horizontal image of $K$,
and the diagonal image of $H$, so these forms separate the pair.

The slope extensions cannot separate an outside pair
$\vc g+\alpha[t,0]$, with $\alpha\ne0$ and $\vc g\in\widehat L$, from
$\widehat L$: setting $u[t]=0$ sends it to the same image as $\vc g$.
Instead, fix $u$ to extract the coefficient of $t$ and let $v$ vary on the
coefficient space of $\Q[x]_{\le2}$, extending it by zero on every added
basis vector. We determine when these choices separate the shifted pair
from the whole family. Write
$\vc g=[p,q+\sum_d\theta_db_{L,d}]$, where $[p,q]\in L$.
Here $q\ne0$: the second-entry polynomials of the listed generators of $L$
are linearly independent, so $q=0$ would force all $\theta_d=0$ and
$\vc g=0$, placing the shifted pair in $K$.

The pair's image is $(\alpha,v[q])$, off the vertical images of all the
$\widehat L$.  The images of $K$ and $H$ are the horizontal axis and
$\Q(1,v[1])$, respectively. Thus $u\times v$ separates the pair exactly when
$v[q]\ne0$ and $v[q]-\alpha v[1]\ne0$.  If $q\ne\alpha$, the polynomials $q$ and
$q-\alpha$ are both nonzero, so some rational linear form satisfies both
inequalities. This separates every such shifted pair without a root.  When
$q=\alpha$, however, the image is always $\alpha(1,v[1])$, in the image of $H$; no choice
of $v$ completes a separator with this fixed $u$.  Such a pair is
\[
 [p+\alpha t,\ \alpha+\sum_d\theta_db_{L,d}].
\]
Removing the $t$-shift gives $\vc g\in\widehat L$. By the construction of
$\widehat L$, this means $[p,\alpha]\in L$, and the $\theta_d$ are the
coefficients of $[p,\alpha]$ in the listed generators of $L$. Determining
$[p,\alpha]$ therefore determines the added terms too, and hence the
whole shifted pair.

In $\Delta$, the condition $[p,\alpha]\in\Delta$ forces $p=\alpha$.
Only the generator $[1,1]$ occurs, so
$\vc g=\alpha[1,1+b_{\Delta,0}]$.
In $M$, the second entry is divisible by $x$, so it cannot equal the
nonzero constant $\alpha$.
In $E$, write $[p,\alpha]=[a+\kappa f,2a+\kappa f]$, with $a,\kappa\in\Q$.
Since $f=x^2-1$, a constant second entry forces $\kappa=0$ and
$a=\alpha/2$. Only the generator $[1,2]$ occurs, so
$\vc g=\alpha[\tfrac12,1+\tfrac12b_{E,0}]$.
Restoring the shift $\alpha[t,0]$ therefore gives the following two
shifted lines:
\[
 \Q[1+t,1+b_{\Delta,0}],\qquad
 \Q[\tfrac12+t,1+\tfrac12b_{E,0}].
\]
The preceding constructions separate every outside pair off these
two lines, using a rational root only to separate the nonzero multiples
of $\vc e$. We now choose the evaluation extension to separate the
nonzero pairs on these two lines as well.

\parhead{Choosing the root extension for the two shifted lines.}  Return to the
evaluation extension defined by \eqref{eq:overview-root-evaluation} and
\eqref{eq:overview-root-extension}, with $u[t]=\tau>0$. The two shifted
representatives have images \((1+\tau,1)\) and \((\tfrac12+\tau,1)\).  Their nonzero
second coordinates keep them off the horizontal image of $K$.  Their
first-coordinate offsets $1$ and $\tfrac12$ keep them off $\Q(\tau,1)$, the image
of $H$. Each finite-slope $\widehat L$ image excludes at most one value of
$\tau$ for each representative. Choose one positive rational $\tau$ outside these
finitely many exclusions. The resulting separator separates every nonzero
multiple of both representatives and preserves the separation of the nonzero
multiples of $\vc e$ already proved.

A rational root therefore supplies separators for every outside pair, and
Lemma~\ref{lem:finite-subcollection} supplies a finite list. Conversely, such
a list separates $\vc e$ and yields a root. Section~\ref{sec:transfer} gives
the construction for arbitrary quadratic systems.

\subsection{Second reduction: from finite separation to stratifiability}
For a family $D_1,\ldots,D_k$, represent integer pairs outside its union by differences of nonnegative coordinates. The output is
\[
 S=\left\{(\vc x^+,\vc x^-,\vc y^+,\vc y^-)\in\N^{4r}:
 (\vc x^+-\vc x^-,\vc y^+-\vc y^-)\notin\bigcup_iD_i\right\}.
\]
Membership is given by linear equations and their negations, so Lemma~\ref{lem:semilinearity} computes a semilinear presentation of $S$. The order of the four coordinate blocks will be used when checking stratifiability.

For each separator $u\times v$, collect the elements of $S$ whose represented pairs it separates. Membership in this subset depends on two weighted sums:
\[
 u(\vc x^+-\vc x^-),\qquad v(\vc y^+-\vc y^-).
\]
After clearing the coefficient denominators of $u,v$, the permitted values satisfy a Presburger condition on two integer-weighted sums over consecutive coordinate blocks. Lemma~\ref{lem:weighted} shows that the associated letter-bounded language is context-free; Ginsburg and Spanier's characterization \cite[Theorem~2.1]{GS66} makes this subset stratifiable. Thus a finite separating family for all outside pairs gives a finite stratifiable cover of $S$.

Conversely, suppose $S$ is stratifiable. We extract a finite separating family for $D_1,\ldots,D_k$ from a stratification. Increasing matching positive and negative coordinates by the same amount leaves the represented pair unchanged. The representations of each outside integer pair therefore contain a grid of these cancellation directions. In any finite stratification, some component meeting this grid has nonnegative period combinations with signed difference zero spanning all these directions.

Consider the periods occurring in these combinations. Their signed images can be added and subtracted within the set of pairs represented by the component. The restriction on alternating supports ensures that every linear form vanishing on the rational span of these images is a linear combination of one fixed form $u$ on the first entry and one fixed form $v$ on the second. The forms can be chosen nonzero, and $\ker u\times\ker v$ is therefore contained in that span. Fix such a separator for every component with the stated spanning property. Fix a rational pair outside the subspaces. If a selected separator fails to separate it from $D_i$, the pair differs from a point of $D_i$ by a rational combination of those period images. Choose one such expression for each candidate component and each $D_i$ whenever possible. Clear all these coefficients and the pair's coordinates by one common integer multiple, then select a component representing that multiple. Subtracting the resulting integer combination from the represented multiple would put a point of $D_i$ among the pairs represented by this component, a contradiction. Thus the fixed separators separate every outside rational pair (Lemma~\ref{lem:finite-necessity}). The proof adapts the minimal-element argument of Ginsburg and Spanier \cite[proof of Theorem~1.3]{GS66} and the structure of stratified periods studied by Brough \cite{Brough}.

\section{Preliminaries}\label{sec:prelim}
Put $\N=\{0,1,2,\ldots\}$.
Bold letters denote vectors and tuples; their coordinates are written in ordinary type.
A \emph{linear set} in $\N^d$ is a set
\[
 \vc{a}+P^*=\left\{\vc{a}+\sum_{i=1}^\ell n_i\vc{t}_i:n_i\in\N\right\},
 \qquad \vc{a}\in\N^d,\quad P=\{\vc{t}_1,\ldots,\vc{t}_\ell\}\subseteq\N^d.
\]
The vector $\vc{a}$ is its \emph{base}, and the elements of $P$ are its \emph{periods}. A \emph{semilinear set} is a finite union of linear sets. A \emph{semilinear presentation} lists the base and periods of each component.

A finite period set $P$ is \emph{stratified} if
\begin{enumerate}
\item every $\vc{t}\in P$ has at most two nonzero coordinates; and
\item there are no $\vc{t}_1,\vc{t}_2\in P$ and $i<j<k<l$ with $t_{1,i}t_{1,k}t_{2,j}t_{2,l}\ne0$.
\end{enumerate}
A \emph{stratification} of a set is a finite semilinear presentation whose period sets are stratified. A set is \emph{stratifiable} if it has a stratification.

Presburger arithmetic is first-order arithmetic on the integers, or on $\N$, with addition and order but without multiplication of variables. A subset of $\N^d$ is semilinear if and only if it is Presburger-definable. One can effectively convert between these formulas and semilinear presentations \cite[Theorem 1.3]{GS66}. We also use the following classical theorem \cite[Theorem 2.1]{GS66}.
\begin{theorem}[Ginsburg and Spanier, {\cite[Theorem~2.1]{GS66}}]\label{thm:GS}
For distinct letters $a_1,\ldots,a_d$ and $S\subseteq\N^d$, the \emph{letter-bounded language}
$L(S)=\{a_1^{n_1}\cdots a_d^{n_d}:\vc{n}\in S\}$ is context-free if and only if $S$ is stratifiable.
\end{theorem}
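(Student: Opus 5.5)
The plan is to prove the two directions separately. For stratifiable $\Rightarrow$ context-free, build a grammar for each linear component. For context-free $\Rightarrow$ stratifiable, run a Parikh-style pumping decomposition of derivation trees and read the stratification off the pumps.

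\emph{Stratifiable $\Rightarrow$ context-free.} Context-free languages are closed under finite union, so it suffices to treat a single linear set $\vc a+P^*$ with $P$ stratified. I would use nonterminals $Y_{i,j}$ for $1\le i\le j\le d$; intuitively $Y_{i,j}$ generates words in $a_i^*\cdots a_j^*$ whose exponent vector is a sum of periods supported in $\{i,\ldots,j\}$. The productions are:
\begin{itemize}
\item $Y_{i,j}\to a_i^{t_i}Y_{i,j}a_j^{t_j}$ for every period $\vc t$ with support exactly $\{i,j\}$, where $i<j$;
\item $Y_{i,i}\to a_i^{t_i}Y_{i,i}$ for every period supported on $\{i\}$;
\item the splitting rules $Y_{i,j}\to Y_{i,m}Y_{m,j}$ for $i\le m\le j$, and $Y_{i,j}\to\varepsilon$.
\end{itemize}
A splitting rule is harmless because both halves contribute only letters $a_m$ at the junction, so the result stays in bounded form. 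The base is inserted by the start rule $S\to a_1^{a_1}Y_{1,1}Y_{1,2}\cdots$, or simply by letting $Y_{i,i}$ emit $a_i^{a_i}$ once through a marked copy.

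Soundness is immediate: every derivation adds a nonnegative combination of periods to the base. For completeness I would use condition~2 of stratification. The supports of the two-coordinate periods form a non-crossing family of intervals, that is, any two are nested or share at most endpoints. An arbitrary combination $\sum n_{\vc t}\vc t$ can then be derived by induction on the laminar structure: split $Y_{1,d}$ along the maximal intervals, apply the pumping rules of an interval around its inside, and recurse. Non-crossing is exactly what makes this nesting possible. The grammar is finite because $P$ is finite.

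\emph{Context-free $\Rightarrow$ stratifiable.} Let $G$ be a grammar for $L(S)\subseteq a_1^*\cdots a_d^*$. First I would refine $G$ by intersecting with the regular language $a_1^*\cdots a_d^*$. In the refined grammar each nonterminal carries the pair of block indices at which its yield begins and ends, and useless symbols are removed. Next, following the proof of Parikh's theorem, I would consider \emph{skeleton} derivation trees, in which no nonterminal repeats on a path except for the removable pumps. There are finitely many skeletons. Every derivation tree is obtained from a skeleton by inserting pumps $A\Rightarrow^* vAx$ with $|vx|>0$ and bounded size, each at a node labelled $A$ already present.

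Because $A$ carries block annotations, $v$ lies in a single block $a_i^*$ and $x$ in a single block $a_k^*$. Thus each pump contributes a vector with at most two nonzero coordinates, which gives condition~1. For each skeleton together with the set of pump nonterminals it contains, take the linear set whose base is the skeleton's Parikh vector and whose periods are the pump vectors. There are finitely many such pairs, and their union is $S$ by the standard Parikh argument.

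For condition~2, take two pumps inserted into the same tree. The substrings they generate are nested or disjoint in the yield. Since block positions are monotone along the word, the supports $\{i,k\}$ and $\{j,l\}$ then cannot interleave as $i<j<k<l$.

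The main obstacle I expect is this last step done carefully. Two pumps in the same component may be attached at different nodes, and they are never inserted together in a single tree. One must check that their supports are still non-crossing. The idea I would try first is to compare the positions of the attachment nodes in the skeleton. For a crossing to occur, both pumps would have to be inserted simultaneously and then realize interleaved positions in the yield. That is impossible for subtrees, which are either nested or disjoint. Since the component's words use all its pumps together, a crossing configuration would be realized in an actual derivation tree, which gives the contradiction.
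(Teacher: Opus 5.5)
The paper does not prove this theorem. It quotes it from \cite[Theorem~2.1]{GS66} and uses it as a black box; Lemma~\ref{lem:separated-region} needs the hard direction, context-free $\Rightarrow$ stratifiable. On its own terms, your outline is a correct proof along classical lines. For the easy direction, the interval grammar on the $Y_{i,j}$ works because non-crossing makes the two-element supports laminar up to shared endpoints, and the splitting rules can follow that structure. For the converse, the Parikh decomposition over a block-annotated grammar works.

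One place needs repair. The base insertion is garbled as written: a rule like $S\to a_1^{a_1}Y_{1,1}Y_{1,2}\cdots$ cannot place base letters inside a piece $Y_{i,j}$ that spans several blocks, and the ``marked copy'' is left unspecified. Here is a clean fix, writing the base as $\vc c$ to avoid clashing with the letters:
\begin{itemize}
\item add fresh letters $\#_i$ with productions $Y_{i,i}\to\#_iY_{i,i}$;
\item intersect with the regular language of words containing each $\#_i$ exactly once;
\item apply the homomorphism $\#_i\mapsto a_i^{c_i}$.
\end{itemize}

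The obstacle you flag for condition~2 is not real, and your remark that two pumps of one component are never inserted together is false. The linear set lies in $S$ precisely because any multiset of its pumps can be inserted simultaneously into the skeleton, each at a skeleton node carrying its root nonterminal. This is why the periods must be the bounded pumps using only nonterminals of the skeleton. Note also that skeletons must be allowed repeated nonterminals on a path when removing the repeat would delete a nonterminal, so ``no nonterminal repeats'' is imprecise.

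Your final sentence therefore closes the argument, but it is cleaner without any insertion:
\begin{enumerate}
\item Refine the grammar by the automaton for $a_1^*\cdots a_d^*$ with states $1,\ldots,d$ and an $a_q$-transition $p\to q$ whenever $q\ge p$. Nonterminals become $[p,A,p']$, which also makes annotations well defined for empty yields.
\item A pump $[p,A,p']\Rightarrow^*v[p,A,p']x$ forces $v\in a_p^*$ and $x\in a_{p'}^*$. Hence a period with nonzero coordinates $i<k$ has a root annotated exactly $(i,k)$.
\item Both roots are nodes of the single skeleton, so their yields are nested or disjoint.
\item States are nondecreasing along the word, so the two annotations $(i,k)$ and $(j,l)$ satisfy $i\le j\le l\le k$, $j\le i\le k\le l$, $k\le j$, or $l\le i$. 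None of these is compatible with $i<j<k<l$.
\end{enumerate}
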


All vector spaces are over $\Q$. We write $\Span$ for linear span, $U\le V$ for a subspace, and $\dim V$ for dimension. A linear form $u\colon\Q^r\to\Q$ is a weighted sum $u(\vc z)=\sum_i c_i z_i$ with $c_i\in\Q$. For a linear map $T$, $T(E)$ and $T^{-1}(E)$ denote image and inverse image.

Write $\Q[\vc z]_{\le d}$ for polynomials of total degree at most $d$, whose monomials of degree at most $d$ are a basis. We use $[p]$ for the coefficient vector in a specified ordered monomial basis, $[p,q]=([p],[q])$, and $u[p]=u([p])$.

The source problem for our construction is the following.
\begin{quote}
\textbf{Solvability of quadratic systems over $\Q$}
\par\smallskip\noindent\textbf{Input:} A finite list $f_1,\ldots,f_m\in\Q[x_1,\ldots,x_n]$ of polynomials of total degree at most two.
\par\smallskip\noindent\textbf{Question:} Is there a tuple $\vc{c}\in\Q^n$ such that $f_i(\vc{c})=0$ for every $i$?
\end{quote}
Rational coefficients are represented by integer numerators and nonzero integer denominators.

\section{From quadratic systems to finite separation}\label{sec:transfer}
\begingroup\widowpenalty=10000\clubpenalty=10000
Recall that a \emph{separator} is the map $u\times v\colon(\vc x,\vc y)\mapsto(u(\vc x),v(\vc y))$ defined by nonzero rational forms $u,v\colon\Q^r\to\Q$. It separates $\vc w$ from a family of subspaces when $(u\times v)(\vc w)$ lies outside all their images under $u\times v$.
\par\endgroup

\begin{samepage}
\begin{quote}
\textbf{Finite separation by Cartesian products of rational linear forms}
\par\nobreak\smallskip\noindent\textbf{Input:} An integer $r\ge1$ and finitely many rational subspaces \mbox{$D_1,\ldots,D_k\le\Q^r\times\Q^r$}, each specified by finitely many rational spanning vectors.
\par\nobreak\smallskip\noindent\textbf{Question:} Is there a finite list of separators $u_1\times v_1,\ldots,u_h\times v_h$ such that, for every rational pair $\vc w$ outside the subspaces, some $j$ satisfies
\[
 (u_j\times v_j)(\vc w)\notin\bigcup_{i=1}^k (u_j\times v_j)(D_i)?
\]
\end{quote}
\end{samepage}

\begin{theorem}[First reduction]\label{thm:quadratic-transfer}
There is a computable many-one reduction from rational solvability of quadratic systems over $\Q$ to finite separation by Cartesian products of rational linear forms.
\end{theorem}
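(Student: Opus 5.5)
We will generalize the overview construction from $f=x^2-1$ to an arbitrary system $f_1,\ldots,f_m\in\Q[x_1,\ldots,x_n]_{\le2}$.

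\textbf{The family.} Work in the coefficient space of $\Q[\vc x]_{\le2}$, with the monomial basis $1,x_i,x_ix_j$. Define three kinds of subspaces.
\begin{enumerate}
\item The diagonal $\Delta=\Span\{[m,m]\}$, over all monomials $m$ of degree at most two.
\item Multiplication subspaces $M_i=\Span\{[1,x_i]\}\cup\{[x_j,x_ix_j]\}_j$, one for each variable. Their job is to force $\lambda[x_ix_j]=\lambda[x_i]\lambda[x_j]$.
\item Equation subspaces $E_l=\Span\{[1,2],[f_l,f_l]\}$.
\end{enumerate}
Then adjoin fresh basis vectors $b_{L,d}$, one for each generator, replace every generator $[p,q]$ of $L$ by $[p,q+b_{L,d}]$ to obtain $\widehat L$, and add a last variable $t$ together with $H=\Span\{[t,1]\}$ and $K=\Span(\{[t,0]\}\cup\{[b_{L,d},b_{L,d}]\})$. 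The designated pair is $\vc e=[0,1]$. This output is clearly computable.

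\textbf{Soundness.} Suppose a finite separating list exists. Some member separates $\vc e$. As in the overview, $H$ forces $u[t]\ne0$, and $K$ then forces every $v[b_{L,d}]=0$. The restrictions of $u,v$ therefore separate $\vc e$ from $\Delta,M_i,E_l$. Normalizing gives $u=v=\lambda$. Collinearity of the generator images of $M_i$ gives $\lambda[x_ix_j]=c_ic_j$, where $c_i=\lambda[x_i]$, so $\lambda$ is evaluation at $\vc c$. Finally, $E_l$ forces $f_l(\vc c)=0$.

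\textbf{Completeness.} Given a root $\vc c$, we exhibit a separator for every outside rational pair. Lemma~\ref{lem:finite-subcollection} then extracts a finite list. Follow the case split of the overview.
\begin{enumerate}
\item If the second entry has a nonzero $t$-coefficient, extract that coefficient with $v$.
\item Pairs with second entry free of $t$ are handled by the slope extensions. Choose values on $\Q[\vc x]_{\le2}$ and slopes $s_L$ with $v[b_{L,d}]=s_Lu[p_{L,d}]-v[q_{L,d}]$, $u[b_{L,d}]=v[b_{L,d}]$ and $u[t]=v[t]=0$, together with the $v[1]=0$ variant. In each subcase we pick the slopes to avoid finitely many forbidden values.
\item Pairs $[\alpha t,\beta]$ are handled directly, as in the overview.
\item Shifted pairs $\vc g+\alpha[t,0]$ with $\vc g\in\widehat L$ and second polynomial $q\ne\alpha$ are separated by letting $u$ extract the $t$-coefficient and choosing $v$ generic. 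The needed nonvanishing conditions follow from linear independence of the listed second-entry generators; this has to be checked for each $L$.
\item The residual pairs are the lines $\Q\,[p+t,\,1+\sum_d\theta_db_{L,d}]$ with $[p,1]\in L$. Since $[p,1]$ lies in the finite-dimensional space $L$ and the constant-second-entry condition is linear, there are finitely many lines per $L$ up to a linear subfamily. These lines, together with the multiples of $\vc e$, are separated by the evaluation extension at $\vc c$ with $u[t]=\tau$ a generic positive rational.
\end{enumerate}

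\textbf{Main obstacle.} I expect the main obstacle to be this last step. For general $L$ the set of $[p,1]\in L$ need not be a single line. It is an affine slice of $L$, because only one entry is pinned to the constant $1$, so the residual pairs may form an affine family rather than finitely many lines.

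First I would compute these slices for $\Delta$, $M_i$ and $E_l$:
\begin{enumerate}
\item For $\Delta$, we get $p=1$.
\item For $M_i$, no element has a nonzero constant second entry.
\item For $E_l$, a constant second entry $2a+\kappa f_l$ forces $\kappa=0$ whenever $f_l$ is nonconstant. Constant $f_l$ is a trivial case handled separately, either rejected or discarded.
\end{enumerate}
So there are at most $1+m$ residual lines. Each $\widehat L$-image has finite slope, as do the images of the residual representatives $(p(\vc c)+\tau,1)$, so each excludes finitely many $\tau$. Their first-coordinate offsets $1$ and $\tfrac12$ are nonzero, which keeps them off the image of $H$. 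Choosing $\tau$ outside these finitely many exclusions then finishes the argument.
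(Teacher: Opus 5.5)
Your proposal is correct and follows the paper's proof essentially step for step. The common ingredients are the enlargement by $t$, the $b_{L,d}$, $H$ and $K$; soundness via $H$ and $K$ forcing $v[b_{L,d}]=0$; the same residual shifted lines, handled by the evaluation extension with a generic $\tau$; and Lemma~\ref{lem:finite-subcollection}. The one variation is that you use one subspace $E_l$ per equation where the paper uses a single $E$ spanned by $[1,2]$ and all $[f_i,f_i]$. This swaps the paper's preliminary test $1\notin\Span_{\Q}\{f_1,\ldots,f_m\}$ for your per-equation nonconstancy check, and gives $m+1$ shifted lines instead of two.

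Your ``main obstacle'' does not arise. Injectivity of the second projection of each $L$, which you already need in step~4, allows at most one $p$ with $[p,1]\in L$. The real work is the step-2 classification, which you only assert. The paper proves it by determining when $\rho$, $\sigma-\rho$ and $\sigma-s_J\rho$ vanish identically as polynomials in the form values and slopes.
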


\begin{proof}
Write the input equations as $f_i=0$. First test by rational linear algebra whether $1\in\Span_{\Q}\{f_1,\ldots,f_m\}$. If so, the system has no rational solution; replace it by $x^2-2=0$, which also has no rational solution and whose equation span does not contain $1$. We may therefore assume $1\notin\Span_{\Q}\{f_1,\ldots,f_m\}$.

The construction in Sections~\ref{sec:pointed-construction} and~\ref{sec:enlarged-construction} computes from the input equations a finite family $\mathcal D$ of rational subspaces and a designated pair outside their union. Lemmas~\ref{lem:recover-pointed} and~\ref{lem:pointed-quadratic} show that any separator of this pair yields a rational solution. Thus a finite separating family for $\mathcal D$ yields a solution.

Conversely, under the assumption on the equation span, Lemma~\ref{lem:root-free-separation} separates every outside pair off three explicit rational lines, by choosing image slopes and extracting coefficients of the added variable $t$. If a rational solution exists, the evaluation extension in Lemma~\ref{lem:pointwise-completion} separates every nonzero pair on those three lines with one choice of its positive parameter. Lemma~\ref{lem:finite-subcollection} then selects finitely many separators serving all outside pairs.
\end{proof}

The following fact about a fixed finite family of rational subspaces justifies this finite selection. It allows the forms in a separation proof to depend on the pair being separated.

\begin{lemma}[A finite subcollection suffices]\label{lem:finite-subcollection}\prooflink{lem:finite-subcollection}
Fix rational subspaces $D_1,\ldots,D_k\le\Q^r\times\Q^r$. Every collection of separators has a finite subcollection separating exactly the same pairs from $D_1,\ldots,D_k$.
\end{lemma}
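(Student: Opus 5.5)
The plan is to show that the set of pairs \emph{not} separated by a single separator is a finite union of subspaces. After that, the statement follows from a descending chain condition for such finite unions.

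\textbf{Step 1: one separator.} Fix $u\times v$. A pair $\vc w$ fails to be separated from $D_i$ exactly when $(u\times v)(\vc w)=(u\times v)(\vc d)$ for some $\vc d\in D_i$. This means $\vc w-\vc d\in\ker(u\times v)=\ker u\times\ker v$, that is, $\vc w\in D_i+(\ker u\times\ker v)$. Hence the set of pairs not separated by $u\times v$ is
\[
 N(u\times v)=\bigcup_{i=1}^k\bigl(D_i+(\ker u\times\ker v)\bigr),
\]
a union of at most $k$ subspaces of $\Q^r\times\Q^r$. For a collection $\mathcal C$, the pairs separated by no member of $\mathcal C$ form $N(\mathcal C)=\bigcap_{\sigma\in\mathcal C}N(\sigma)$. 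So it suffices to find a finite $\mathcal F\subseteq\mathcal C$ with $N(\mathcal F)=N(\mathcal C)$, because then $\mathcal F$ and $\mathcal C$ separate exactly the same pairs.

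\textbf{Step 2: reduction to a chain condition.} Call a subset of $\Q^{2r}$ \emph{constructible} if it is a finite union of subspaces. A finite intersection of constructible sets is constructible, since it is the union of all intersections of one component from each set. Among the sets $N(\mathcal F)$ with $\mathcal F\subseteq\mathcal C$ finite, choose one that is minimal under inclusion. If such a minimal $N(\mathcal F)$ exists, then for every $\sigma\in\mathcal C$ minimality gives $N(\mathcal F)\cap N(\sigma)=N(\mathcal F)$. Hence $N(\mathcal F)=N(\mathcal C)$. A minimal element exists provided constructible sets satisfy the descending chain condition.

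\textbf{Step 3: the chain condition, which is the main step.} One could simply cite Noetherianity of the Zariski topology via Hilbert's basis theorem, since subspaces are Zariski closed. For an elementary argument, I would use two facts.
\begin{itemize}
\item Over the infinite field $\Q$, a subspace contained in a finite union of subspaces lies in one of them. This is the standard fact that a vector space is not a finite union of proper subspaces, applied inside that subspace.
\item Consequently, every constructible set has a unique irredundant decomposition into maximal subspaces. Assign to it the vector $(n_{2r},n_{2r-1},\ldots,n_0)$, where $n_j$ counts its irredundant components of dimension $j$.
\end{itemize}
Now let $X'\subsetneq X$ be constructible sets. By the first fact, each component of $X'$ lies in some component of $X$. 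Let $j$ be the largest dimension at which the components of dimension $j$ of $X$ and $X'$ do not coincide as sets of subspaces. Such a $j$ exists: if the two families agreed in every dimension, then $X'=X$.

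At dimension $j$, every $j$-dimensional component of $X'$ lies in some component of $X$. That component cannot have dimension greater than $j$: above $j$ the component families of $X$ and $X'$ agree, so it would also be a component of $X'$, contradicting irredundancy of $X'$. So it has dimension exactly $j$, and by equality of dimensions it coincides with that component. Thus the $j$-dimensional components of $X'$ form a subset of those of $X$, and by the choice of $j$ the inclusion is proper. Hence the vector strictly decreases lexicographically. Vectors in $\N^{2r+1}$ are well-ordered lexicographically, so every strictly descending chain is finite.

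The only delicate points are the uniqueness of irredundant decompositions and the bookkeeping above. Both follow from the first fact.
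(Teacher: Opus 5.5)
Your proof is correct, but it takes a different route from the paper's. You treat the set of pairs not separated by $u\times v$ as the finite union of subspaces $\bigcup_i\bigl(D_i+(\ker u\times\ker v)\bigr)$. You then prove a descending chain condition for finite unions of subspaces, using unique irredundant decompositions and a lexicographic count of components by dimension. Each step checks out, including the degenerate cases $\mathcal C=\emptyset$ and $k=0$.

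The paper instead linearizes the problem. It writes $T^{-1}(T(D_i))$ as the common zero set of rational linear forms $\ell_{T,i,j}$. It then encodes non-separation by $T$ as the vanishing of the single polynomial $h_T=\prod_i\sum_j\ell_{T,i,j}^2$, which works because a sum of rational squares vanishes only when every term does. Finally, it retains the separators whose polynomials $h_T$ form a basis of the span of all of them. That span is finite-dimensional because every $h_T$ has degree at most $2k$.

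The paper's argument is shorter and avoids all decomposition bookkeeping. It also implicitly bounds the size of the subcollection by the number of monomials of degree at most $2k$ in $2r$ variables, independently of the given collection. However, it uses a special property of $\Q$: a sum of squares vanishes only when every term does. Your argument is an elementary instance of Noetherianity of the Zariski topology. It works verbatim over any infinite field, including fields where the sum-of-squares encoding fails, but it proves finiteness without directly bounding the number of separators needed.
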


\subsection{Making separation at one pair equivalent to a rational root}\label{sec:pointed-construction}
Let the input equations be $f_i(\vc x)=0$, $1\le i\le m$, where $\vc x=(x_1,\ldots,x_n)$. Throughout this construction assume $1\notin\Span_{\Q}\{f_1,\ldots,f_m\}$, as permitted by the preliminary step in the proof of Theorem~\ref{thm:quadratic-transfer}. Fix an ordering of the monomials $1$, $x_j$, and $x_jx_k$ with $j\le k$ in $\Q[\vc x]_{\le2}$, beginning with $1$. Represent polynomials by their coefficient vectors in this basis, as in Section~\ref{sec:prelim}.

To recover a common root, we make the separator determine one evaluation of the input polynomials. A diagonal subspace makes its forms agree after normalization; multiplication subspaces make their common form respect products; an equation subspace makes this evaluation vanish on every $f_i$. Define
\begin{align}
 \Delta&=\{[p,p]:p\in\Q[\vc x]_{\le2}\},
                 \label{eq:pointed-diagonal}\\
 M_j&=\{[p,x_jp]:p\in\Q[\vc x]_{\le1}\}
                 &&(1\le j\le n),\label{eq:pointed-multiplication}\\
 E&=\Span\bigl(\{[1,2]\}\cup\{[f_i,f_i]:1\le i\le m\}\bigr).
                 \label{eq:pointed-equations}
\end{align}
Let $\mathcal L=\{\Delta,M_1,\ldots,M_n,E\}$. Set $\vc e=[0,1]$. A separator of this designated pair will determine a rational solution.

\begin{lemma}\label{lem:pointed-quadratic}\prooflink{lem:pointed-quadratic}
The pair $\vc e$ can be separated from $\mathcal L$ by a separator if and only if the quadratic system has a rational solution.
\end{lemma}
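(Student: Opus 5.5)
We prove both directions directly, following the calculation in the overview for $f=x^2-1$. Note that $\vc e=[0,1]$ lies outside every member of $\mathcal L$. It is not in $\Delta$ or in any $M_j$, since the first entry would force the second to vanish. It is not in $E$, since a first entry of $0$ in $a[1,2]+\sum_i\kappa_i[f_i,f_i]$ means $a+\sum_i\kappa_if_i=0$; then $1\notin\Span\{f_i\}$ forces $a=0$, and the second entry is again $0$.

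\emph{Separation gives a solution.} Let $u\times v$ separate $\vc e$. Its image is $(0,v[1])$, which is nonzero because every subspace image contains the origin, so $v[1]\ne0$. Also $u[1]\ne0$, since otherwise $[1,1]\in\Delta$ has image $(0,v[1])$. Rescale so that $u[1]=v[1]=1$; this does not affect separation, because rescaling coordinates is a bijection of $\Q^2$ preserving images of subspaces. Now $(u\times v)(\Delta)$ is a subspace containing $(1,1)$ and avoiding $(0,1)$, so it is the diagonal line. Hence $u[p]=v[p]$ for all $p\in\Q[\vc x]_{\le2}$; call this common form $\lambda$.

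Put $c_j=\lambda[x_j]$. The image of $M_j$ is a line, because it contains $(1,c_j)$ and avoids $(0,1)$. That line contains $(\lambda[x_k],\lambda[x_jx_k])$ as the image of $[x_k,x_jx_k]$, so $\lambda[x_jx_k]=c_jc_k$. Thus $\lambda[p]=p(\vc c)$ for every $p$ of degree at most two. Finally, the image of $E$ contains $(1,2)$ and avoids $(0,1)$, so it is the line of slope $2$. It also contains $(f_i(\vc c),f_i(\vc c))$, which lies on the slope-$2$ line only if $f_i(\vc c)=0$. So $\vc c$ is a rational solution.

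\emph{A solution gives a separator.} Conversely, given a solution $\vc c$, let $u=v$ be evaluation at $\vc c$. The image of $\Delta$ is the diagonal. The image of $M_j$ is the line of slope $c_j$, since it is spanned by the images $(p(\vc c),c_jp(\vc c))$. The image of $E$ is spanned by $(1,2)$ and the images $(0,0)$ of the $[f_i,f_i]$. All of these are lines of finite slope, so none contains the vertical vector $(0,1)$, which is the image of $\vc e$. Both forms are nonzero because $u[1]=1$.

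The main obstacle is bookkeeping rather than a real difficulty: one must verify that each image is a line rather than the whole plane, and use the collinearity of generator images correctly for the products $x_jx_k$ with $j\ne k$.
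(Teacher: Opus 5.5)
Your proof is correct and follows the paper's argument essentially step for step. Both normalize by the nonzero values at $[1]$, use the diagonal image to obtain a common form $\lambda$, use collinearity in the images of the $M_j$ to make $\lambda$ evaluation at $\vc c$, and use properness of the image of $E$ to force $f_i(\vc c)=0$, with evaluation at a root for the converse. The differences are cosmetic: you also check that $\vc e$ lies outside every member of $\mathcal L$ (using $1\notin\Span\{f_i\}$), which the lemma does not need, and you use membership in the slope-$2$ line where the paper uses the determinant $-f_i(\vc c)$.
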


\begin{example}[Why the initial family must be enlarged]\label{ex:pointed-insufficient}\prooflink{ex:pointed-insufficient}
For $f=x^2-1$, evaluation at $1$ separates $\vc e$ from $\Delta,M_1,E$, but the outside pair $[x,1]$ cannot be separated.
\end{example}

\subsection{Enlarging the space while preserving the root requirement at \texorpdfstring{$\vc e$}{e}}\label{sec:enlarged-construction}

We enlarge the space so that a rational solution supplies separators for every outside pair. Giving each basis generator its own added coordinate in the second entry lets $v$ move the modified generator's image vertically, independently of the other generators. These adjustments place each modified subspace image in a chosen line through the origin of finite rational slope. A separate coordinate and two additional subspaces ensure that separating $\vc e$ forces $v$ to vanish on every coordinate added to a generator. Each modified generator then has the same image as the generator it replaces, so separation of $\vc e$ still forces a rational root.

Fix a basis $[p_{L,d},q_{L,d}]$, $0\le d<\dim L$, for each $L\in\mathcal L$. For $\Delta$, take $[\mu,\mu]$ over the monomials, with $[1,1]$ first; for $M_j$, take $[\mu,x_j\mu]$ over $\mu\in\{1,x_1,\ldots,x_n\}$. For $E$, compute a basis $[\widetilde f_1],\ldots,[\widetilde f_s]$ of $\Span\{[f_1],\ldots,[f_m]\}$ and take $[1,2],[\widetilde f_1,\widetilde f_1],\ldots,[\widetilde f_s,\widetilde f_s]$, in this order. A sum over $L,d$ ranges over all these basis vectors; in particular, $d=0$ indexes $[1,1]$ for $\Delta$ and $[1,2]$ for $E$.

The generators specified for $E$ are independent: subtracting the entries of a zero linear combination forces the coefficient of $[1,2]$ to vanish, and independence of the $[\widetilde f_i]$ forces the other coefficients to vanish.

Introduce a polynomial variable $t$ and one variable $b_{L,d}$ for every basis index $(L,d)$, and work in the vector space
\[
 \Q[\vc x]_{\le2}+\Q[t,(b_{L,d})]_{\le1}.
\]
The summands share $1$. Extend the monomial basis of $\Q[\vc x]_{\le2}$ by $t$ and all $b_{L,d}$, obtaining a basis of length $r$ for both entries. Each coefficient vector is uniquely
\[
 \left[p+\alpha t+\sum_{L,d}\eta_{L,d}b_{L,d}\right],
 \qquad p\in\Q[\vc x]_{\le2},\quad \alpha,\eta_{L,d}\in\Q.
\]
For $p\in\Q[\vc x]_{\le2}$, the vector $[p]$ has zero coefficients at $t$ and every $b_{L,d}$; the designated pair in the enlarged space is still $\vc e=[0,1]$.

Replacing $[p_{L,d},q_{L,d}]$ by $[p_{L,d},q_{L,d}+b_{L,d}]$ changes its second image value by $v[b_{L,d}]$. To make each modified generator map to $(u[p_{L,d}],v[q_{L,d}])$ whenever $u\times v$ separates $\vc e$, the line $H$ below forces $u[t]\ne0$, allowing $K$ to force every $v[b_{L,d}]=0$.

Accordingly, define the following subspaces of $\Q^r\times\Q^r$:
\begin{align}
 \widehat{L}&=\Span\{[p_{L,d},q_{L,d}+b_{L,d}]:0\le d<\dim L\}
                      &&(L\in\mathcal L),\label{eq:lifted-subspaces}\\
 H&=\Span\{[t,1]\},\label{eq:lifted-anchor}\\
 K&=\Span\bigl(\{[t,0]\}\cup\{[b_{L,d},b_{L,d}]:L,d\}\bigr).
                      \label{eq:lifted-gate}
\end{align}
Let $\mathcal D=\{\widehat{L}:L\in\mathcal L\}\cup\{H,K\}$ be the resulting family of $n+4$ subspaces.

\begin{lemma}[Recovery at the designated pair $\vc e$]\label{lem:recover-pointed}\prooflink{lem:recover-pointed}
The pair $\vc e=[0,1]$ lies outside every member of $\mathcal D$. If a separator $u\times v$ separates $\vc e$ from $\mathcal D$, the restrictions of $u$ and $v$ to the coefficient vectors of $\Q[\vc x]_{\le2}$ are nonzero and separate $\vc e$ from $\mathcal L$.
\end{lemma}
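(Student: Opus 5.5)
The plan has two parts. First show that $\vc e$ is outside every member of $\mathcal D$. Then assume $u\times v$ separates $\vc e$ from $\mathcal D$ and derive, in order, $u[t]\ne0$, then $v[b_{L,d}]=0$ for all $(L,d)$, and finally separation of $\vc e$ from $\mathcal L$ by the restrictions.

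\textbf{Step 1: $\vc e=[0,1]$ lies outside each subspace.} This is a coefficient comparison in the basis $1,\ldots,t,(b_{L,d})$.
\begin{itemize}
\item For $\widehat L$, write a combination $\sum_d c_d[p_{L,d},q_{L,d}+b_{L,d}]$. Its second entry has coefficient $c_d$ at $b_{L,d}$. Setting it equal to $[1]$, which has no $b$-coefficients, forces all $c_d=0$. The combination is then $0\neq\vc e$.
\item For $H$, the first entry of $\gamma[t,1]$ equals $0$ only if $\gamma=0$, and then the second entry is $0\neq[1]$.
\item For $K$, every generator has second entry in the span of the $[b_{L,d}]$. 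Such a second entry can never equal $[1]$.
\end{itemize}

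\textbf{Step 2: $u[t]\ne0$ and $v[1]\ne0$.} Since $u$ and $v$ are linear, $(u\times v)(\vc e)=(0,v[1])$.
\begin{itemize}
\item Every image contains $(0,0)$, so separation forces $v[1]\ne0$.
\item If $u[t]=0$, then $[t,1]\in H$ has image $(0,v[1])$, equal to the image of $\vc e$. This contradicts separation, so $u[t]\ne0$.
\end{itemize}

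\textbf{Step 3: $v[b_{L,d}]=0$ for every $(L,d)$.} The image of $K$ contains $(u[t],0)$, which is nonzero and horizontal. Suppose some $v[b_{L,d}]\ne0$. Then $(u[b_{L,d}],v[b_{L,d}])$ together with $(u[t],0)$ spans $\Q^2$. The image of $K$ would then be all of $\Q^2$ and would contain $(0,v[1])$, a contradiction.

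Consequently, each modified generator $[p_{L,d},q_{L,d}+b_{L,d}]$ has the same image $(u[p_{L,d}],v[q_{L,d}])$ as $[p_{L,d},q_{L,d}]$. Since these generators span $\widehat L$ and $L$ respectively, $(u\times v)(\widehat L)=(u\times v)(L)$ for each $L\in\mathcal L$.

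\textbf{Step 4: conclude for the restrictions.} Let $u',v'$ be the restrictions of $u,v$ to the coefficient vectors of $\Q[\vc x]_{\le2}$. They satisfy:
\begin{itemize}
\item $v'$ is nonzero because $v'[1]=v[1]\ne0$.
\item The pair $\vc e$ and each $L\in\mathcal L$ lie in this coefficient subspace, so $u'\times v'$ gives them the same images as $u\times v$.
\item Hence $(0,v[1])\notin(u'\times v')(L)$ for every $L$, and $u'\times v'$ separates $\vc e$ from $\mathcal L$ once $u'\ne0$ is known.
\end{itemize}

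\textbf{Main obstacle: showing $u'\neq0$.} The fact $u[t]\ne0$ says nothing about $u$ on the old coordinates, so this needs a separate argument, which I would take from $\Delta$.

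Suppose $u'=0$. Then $[1,1]\in\Delta$ has image $(u[1],v[1])=(0,v[1])$ under $u\times v$. This equals the image of $\vc e$. By Step 3, that image lies in $(u\times v)(\widehat\Delta)$, contradicting separation. Hence $u[1]\ne0$, so $u'\ne0$, and $u'\times v'$ is a genuine separator.
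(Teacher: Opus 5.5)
Your proof is correct and follows the paper's argument step for step. You use the same coefficient comparison to show $\vc e$ lies outside every member of $\mathcal D$, the same use of $[t,1]\in H$ to get $u[t]\ne0$, the same observation that a nonzero $v[b_{L,d}]$ would make the image of $K$ the whole plane, and hence $(u\times v)(\widehat L)=(u\times v)(L)$. The only cosmetic difference is that you get $u[1]\ne0$ through $[1,1]\in\Delta$ and this equality of images, whereas the paper notes directly that the generator $[1,1+b_{\Delta,0}]$ would share the image of $\vc e$.
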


\begin{samepage}
\begin{example}[Why equal forms do not suffice]\label{ex:equal-forms}\prooflink{ex:equal-forms}
For separators of $\vc e$, normalization makes the restrictions of $u,v$ to $\Q[\vc x]_{\le2}$ agree. Equality of the full forms cannot be required for all outside pairs: $[1,1]$ lies outside every member of $\mathcal D$, yet no $u\times u$ separates it. The problem must therefore allow $u$ and $v$ to differ, even when a rational root exists.
\end{example}
\end{samepage}

The constructed instance of finite separation is $\mathcal D$. Its finite separating families yield rational solutions by Lemmas~\ref{lem:recover-pointed} and~\ref{lem:pointed-quadratic}. For the converse, put $Y=\Q^{2r}\setminus\bigcup_{D\in\mathcal D}D$. We first separate all pairs in $Y$ off three explicit lines, using only the assumption on the equation span. A rational solution supplies one separator for the nonzero pairs on those lines; Lemma~\ref{lem:finite-subcollection} then gives a finite separating family.

\subsection{Separating pairs without using a root}\label{sec:background-cover}
The slope extensions keep $u[t]=0$ to permit nonzero $v[b_{L,d}]$ while keeping the image of $K$ proper. These extensions send $\vc g+\alpha[t,0]$ to the same point as $\vc g\in \widehat{L}$. We then choose $u$ to extract the coefficient of $t$ and set every $v[b_{L,d}]=0$, which permits $u[t]=1$ while keeping the image of $K$ horizontal. The proof first treats the pairs $[\alpha t,\beta]$, then the shifted pairs $\vc g+\alpha[t,0]$.

\begin{lemma}\label{lem:root-free-separation}
Every pair in $Y$ outside the union of the three lines
\begin{equation}\label{eq:three-lines}
 \Q\vc e,\qquad
 \Q[1+t,1+b_{\Delta,0}],\qquad
 \Q[\tfrac12+t,1+\tfrac12b_{E,0}]
\end{equation}
can be separated from $\mathcal D$.
\end{lemma}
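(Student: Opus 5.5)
Write an outside pair as $\vc w=[p+\alpha t+\sum\eta_{L,d}b_{L,d},\ q+\beta t+\sum\zeta_{L,d}b_{L,d}]$ with $p,q\in\Q[\vc x]_{\le2}$. The plan follows the case analysis of the overview, now for general $\mathcal L$, and uses two families of separators.

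\textbf{Case 1: $\beta\ne0$.} Every generator of every member of $\mathcal D$ has zero $t$-coefficient in its second entry. Let $v$ extract the $t$-coefficient and take any nonzero $u$. Every subspace image then lies on the horizontal axis, while $\vc w$ maps off it.

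\textbf{Case 2: $\beta=0$; slope extensions.} Choose the values of $u,v$ on $\Q[\vc x]_{\le2}$ together with slopes $s_L$. Extend by $v[b_{L,d}]=s_Lu[p_{L,d}]-v[q_{L,d}]$, $u[b_{L,d}]=v[b_{L,d}]$ and $u[t]=v[t]=0$. With $v[1]=1$ this gives:
\begin{itemize}
\item each $\widehat L$ maps into the line of slope $s_L$;
\item $K$ maps into the diagonal;
\item $H$ maps to the vertical axis.
\end{itemize}
The image of $\vc w$ is $(\rho,\sigma)$, where $\rho$ and $\sigma$ are affine in the slopes (the $\eta$ and $\zeta$ terms bring the $s_L$ in). We need $\rho\ne0$, $\rho\ne\sigma$, and $\sigma\ne s_L\rho$ for each $L$. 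Each condition is a polynomial inequality in the free parameters (values on monomials and slopes), so it suffices to show that none of these polynomials is identically zero. A nonzero polynomial has a rational non-root, and a finite product of nonzero polynomials is nonzero.

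When $v[1]=1$ cannot work, for instance when the first entry is $[\alpha t]$ and the identities fail identically, use the variant $u[1]=1$, $v[1]=0$. Then $H$ maps to zero and we aim for a nonzero vertical image. The bookkeeping shows the relevant polynomials vanish identically only when:
\begin{itemize}
\item $\vc w=[\alpha t,\beta']$ with constant $\beta'$; or
\item $\vc w=\vc g+\alpha[t,0]$ with $\vc g\in\widehat L$.
\end{itemize}
Here $u[t]=0$ kills the $t$-shift. Establishing this exceptional characterization is the main obstacle. It reduces to the $\sigma\ne s_L\rho$ identity: for fixed $L$, vanishing in $s_L$ and in all other parameters forces the $b_{L,\cdot}$ coefficients of $\vc w$ to match those of an element of $\widehat L$.

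\textbf{Case 3: $\vc w=[\alpha t,\beta']$.} Being outside $H$ and $K$ gives $\beta'\ne0$ and $\beta'\ne\alpha$. If $\alpha=0$, the pair lies on $\Q\vc e$ and is excluded. Otherwise take $u$ extracting the $t$-coefficient and $v$ extracting the constant. The image $(\alpha,\beta')$ avoids the vertical images of the $\widehat L$ (their $t$-coefficients vanish), the horizontal image of $K$, and the diagonal image of $H$.

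\textbf{Case 4: shifted pairs.} Take $\vc w=\vc g+\alpha[t,0]$ with $\alpha\ne0$ and $\vc g=[p,q+\sum_d\theta_db_{L,d}]$, $[p,q]\in L$. Fix $u$ to extract the $t$-coefficient, let $v$ be free on $\Q[\vc x]_{\le2}$, and set $v$ to zero on the added variables. Then:
\begin{itemize}
\item each $\widehat L$ maps to the vertical axis;
\item $K$ maps to the horizontal axis;
\item $H$ maps to $\Q(1,v[1])$.
\end{itemize}
Separation holds iff $v[q]\ne0$ and $v[q-\alpha]\ne0$. First, $q\ne0$: the second entries $q_{L,d}$ of the basis of $L$ are independent (this holds for $\Delta$, $M_j$, and for $E$, using $1\notin\Span\{f_i\}$). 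So $q=0$ forces $\vc g=0$ and puts $\vc w$ in $K$. Hence a suitable $v$ exists unless $q=\alpha$ is constant.

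It remains to classify $[p,\alpha]\in L$ with $\alpha\ne0$ constant:
\begin{itemize}
\item in $M_j$ the second entry is divisible by $x_j$, so this is impossible;
\item in $\Delta$ we get $p=\alpha$, with only $[1,1]$ involved;
\item in $E$, write $[p,\alpha]=[a+g,2a+g]$ with $g\in\Span\{f_i\}$. A constant $2a+g$ forces $g=0$ because $1\notin\Span\{f_i\}$, so $a=\alpha/2$.
\end{itemize}
Reconstructing $\vc g$ from these coefficients gives exactly the two shifted lines in \eqref{eq:three-lines}, which completes the proof.
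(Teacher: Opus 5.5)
Your decomposition is the paper's own. Case~1 is its $t$-coefficient extraction. Case~2 uses the same slope extensions with $v[1]=1$ and the same variant $u[1]=1$, $v[1]=0$. Cases~3 and~4 reproduce the paper's treatment of the residual pairs essentially line by line, including $q\ne0$ from independence of the $[q_{L,d}]$ and the classification of $[p,\alpha]\in L$. The gap is the step you yourself call the main obstacle. You assert, rather than prove, that the relevant polynomials vanish identically only for $[\alpha t,\beta']$ and for pairs in $\widehat L+\Q[t,0]$ (``the bookkeeping shows''). That claim is the entire content of the coarse classification the paper verifies in Appendix~\ref{app:slope-separation}.

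To close it, expand $\rho=u[p]+\sum_Ls_Lu[p_L^\eta]-v[q^\eta]$ and $\sigma=v[q-q^\zeta]+\sum_Ls_Lu[p_L^\zeta]$, and compare coefficients in the slopes. Here $p_L^\eta=\sum_d\eta_{L,d}p_{L,d}$ and $q^\eta=\sum_{L,d}\eta_{L,d}q_{L,d}$, and similarly for $\zeta$.

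Your one-sentence sketch of the $\sigma-s_J\rho$ identity mentions only the second-entry coefficients $\zeta$. The actual order of steps is:
\begin{enumerate}
\item the $s_J^2$ and $s_Js_L$ coefficients force every first-entry coefficient $\eta_{L,d}=0$;
\item the $s_L$ coefficients then force $\zeta_{L,d}=0$ for $L\ne J$;
\item only then do the $s_J$ and slope-free terms give $p=p_J^\zeta$ and $q=\sum_d\zeta_{J,d}q_{J,d}$.
\end{enumerate}
These steps need independence of the first entries $[p_{L,d}]$ for each $L$, which you never state. For $E$ this is a second place where $1\notin\Span\{f_1,\ldots,f_m\}$ is used.

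You also need three further identities:
\begin{enumerate}
\item $\rho\equiv0$ only when the first entry is a multiple of $[t]$;
\item $\sigma-\rho\equiv0$ only when $\vc w\in K$; without this, some pair of $Y$ outside your exception list could defeat the condition $\rho\ne\sigma$;
\item for the variant, $\sigma\equiv0$ under $u[1]=1$, $v[1]=0$ forces all $\zeta_{L,d}=0$ and $q$ constant.
\end{enumerate}
Each is a short coefficient comparison. If $u[h]=0$ for all forms with $u[1]=1$, then $h=0$, whereas $v[h]=0$ for all forms with $v[1]=0$ only makes $h$ constant. Until these comparisons are carried out, Case~2 is a plan rather than a proof.
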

\begin{proof}
The coefficient extraction and slope extensions verified in
\prooflink{coarse-classification}Appendix~\ref{app:slope-separation}
separate all pairs in $Y$ except possibly
\begin{equation}\label{eq:coarse-residuals}
 [\alpha t,\beta],\qquad
 \vc g+\alpha[t,0],\quad
 \vc g=\left[p,q+\sum_d\theta_db_{L,d}\right]\in \widehat{L},
\end{equation}
where $\beta\ne0$, $\beta\ne\alpha$ in the first case, and $\alpha\ne0$, $q\ne0$ in the second.

For $[\alpha t,\beta]$, the case $\alpha=0$ gives a nonzero multiple of $\vc e$, on the first line in~\eqref{eq:three-lines}. If $\alpha\ne0$, let $u$ and $v$ extract the coefficients of $t$ and $1$, respectively. The images of all the $\widehat L$ lie on the vertical axis, the image of $K$ is horizontal, and the image of $H$ is diagonal. The pair's image $(\alpha,\beta)$ avoids all of them because $\alpha\ne0$, $\beta\ne0$, and $\beta\ne\alpha$. Both forms are nonzero.

Now consider a shifted pair $\vc g+\alpha[t,0]$ in~\eqref{eq:coarse-residuals}. Fix $u$ to extract the coefficient of $t$, and let $v$ vary on the coefficient space of $\Q[\vc x]_{\le2}$, extending it by zero on every added basis vector. For every $J\in\mathcal L$, the $\widehat J$ generators map to $(0,v[q_{J,d}])$; the image of $K$ is horizontal, and $H$ maps to $\Q(1,v[1])$. The shifted pair maps to $(\alpha,v[q])$ and already avoids every $\widehat J$ image since $\alpha\ne0$. Thus it is separated exactly when
\begin{equation}\label{eq:reader-conditions}
 v[q]\ne0,\qquad v[q]-\alpha v[1]\ne0.
\end{equation}
By linearity, the second inequality is $v[q-\alpha]\ne0$. If $q\ne\alpha$, both $q$ and $q-\alpha$ are nonzero. The product $v[q]v[q-\alpha]$, viewed as a polynomial in the coefficients of $v$, is therefore nonzero at some rational tuple, giving a separator. Both forms are nonzero since $u[t]=1$ and $v[q]\ne0$.

If $q=\alpha$, the shifted pair has image $\alpha(1,v[1])$, in the image of $H$ for every $v$. To describe these pairs, note that $[p,\alpha]\in L$ with $\alpha\ne0$. The coefficients $\theta_d$ in $\vc g$ are precisely the coefficients of $[p,\alpha]$ in the fixed basis of $L$, so determining $[p,\alpha]$ also determines every added term.

For $L=\Delta$, this forces $p=\alpha$. Only the basis generator $[1,1]$ occurs, so the shifted pair is $\alpha[1+t,1+b_{\Delta,0}]$. No $M_j$ contributes such a pair, since its second entry is divisible by $x_j$ and cannot be a nonzero constant. For $L=E$, write $[p,\alpha]=[a+h,2a+h]$ with $h\in\Span_{\Q}\{f_1,\ldots,f_m\}$. The equation $2a+h=\alpha$ makes $h$ constant; the standing assumption on the equation span gives $h=0$ and $a=\alpha/2$. Only the basis generator $[1,2]$ occurs, so the shifted pair is $\alpha[\tfrac12+t,1+\tfrac12b_{E,0}]$. These are precisely the other two lines in~\eqref{eq:three-lines}.
\end{proof}

\subsection{Completing separation using a rational root}
Lemma~\ref{lem:root-free-separation} separates every outside pair off the three lines in~\eqref{eq:three-lines} without using a root. At a rational root $\vc c$, let both forms evaluate polynomials in $\Q[\vc x]_{\le2}$ at $\vc c$, set every $u[b_{L,d}]=v[b_{L,d}]=v[t]=0$, and put $u[t]=\tau>0$. Every such extension separates the nonzero multiples of $\vc e$, so we choose $\tau$ to separate the nonzero pairs on the two shifted lines as well. Their representatives have images $(1+\tau,1)$ and $(\tfrac12+\tau,1)$, which avoid the horizontal image of $K$ and the image $\Q(\tau,1)$ of $H$. Each $\widehat L$ image excludes at most one value of $\tau$ for each representative, so one positive rational value separates every nonzero multiple of either representative while preserving separation of the nonzero multiples of $\vc e$.

\begin{samepage}
\begin{lemma}\label{lem:pointwise-completion}\prooflink{lem:pointwise-completion}
If the quadratic system has a rational solution, every pair in $Y$ can be separated from $\mathcal D$.
\end{lemma}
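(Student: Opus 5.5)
By Lemma~\ref{lem:root-free-separation}, it suffices to separate the nonzero pairs on the three lines in~\eqref{eq:three-lines}. Every such pair lies in $Y$: the line $\Q\vc e$ meets no member of $\mathcal D$ except at $0$ by Lemma~\ref{lem:recover-pointed}, and membership of the two shifted representatives in $Y$ is what places them among the residuals of that lemma. Fix a rational solution $\vc c$. I would use a single family of separators, the evaluation extensions: $u[p]=v[p]=p(\vc c)$ for $p\in\Q[\vc x]_{\le2}$, $u[b_{L,d}]=v[b_{L,d}]=v[t]=0$, and $u[t]=\tau$ with $\tau>0$ rational. Both forms are nonzero because $u[1]=v[1]=1$. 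Since separation depends only on the line through a pair, it is enough to treat $\vc e$ and the two representatives.

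First I compute the images of $\mathcal D$. For $L\in\mathcal L$, each generator of $\widehat L$ maps to $(p_{L,d}(\vc c),q_{L,d}(\vc c))$, because the $b_{L,d}$ terms vanish. For $\Delta$ this is $(\mu(\vc c),\mu(\vc c))$, and the image is the diagonal line of slope $1$, since $\mu=1$ gives a nonzero generator. For $M_j$ the generators map to $(\mu(\vc c),c_j\mu(\vc c))$, so the image is the line of slope $c_j$. For $E$, the generator $[1,2]$ maps to $(1,2)$ and each $[\widetilde f_i,\widetilde f_i]$ maps to $(0,0)$ because $\widetilde f_i$ lies in the span of the $f_i$, which vanish at $\vc c$. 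So the image of $E$ is the line of slope $2$. In every case the image is a line of finite slope through the origin. The image of $K$ is spanned by $(\tau,0)$ and the zero images of $[b_{L,d},b_{L,d}]$, so it is the horizontal axis. The image of $H$ is $\Q(\tau,1)$.

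Next I check the three lines against these images. A nonzero multiple of $\vc e$ maps to $(0,\beta)$ with $\beta\ne0$. This point lies on no line of finite slope, is not horizontal, and is not in $\Q(\tau,1)$ since $\tau\ne0$; so every $\tau>0$ works for this line. The representatives $[1+t,1+b_{\Delta,0}]$ and $[\tfrac12+t,1+\tfrac12b_{E,0}]$ map to $(1+\tau,1)$ and $(\tfrac12+\tau,1)$. Their second coordinate is nonzero, so they avoid the horizontal axis. They are not multiples of $(\tau,1)$, because any such multiple with second coordinate $1$ equals $(\tau,1)$, and the first coordinates differ by $1$ or $\tfrac12$. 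A line of slope $s$ contains $(a+\tau,1)$ only when $s(a+\tau)=1$. For $s=0$ this never happens; for $s\ne0$ it holds for exactly one value $\tau=1/s-a$. So the slopes $1$, $c_1,\ldots,c_n$ and $2$, paired with the two offsets $a\in\{1,\tfrac12\}$, exclude at most $2(n+2)$ values of $\tau$. I choose a positive rational $\tau$ avoiding all of them.

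By linearity of $u\times v$, the images of the subspaces are closed under scaling, so this single separator separates every nonzero multiple of either representative, as well as every nonzero multiple of $\vc e$. Together with Lemma~\ref{lem:root-free-separation}, every pair in $Y$ is separated. The only step needing care is the identification of the image of $E$, where it must be checked that the basis vectors $[\widetilde f_i]$ really evaluate to zero at $\vc c$. That holds because they lie in $\Span\{[f_1],\ldots,[f_m]\}$, and evaluation at $\vc c$ is a linear form vanishing on each $[f_i]$.
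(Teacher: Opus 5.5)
Your proposal is correct and follows the paper's proof essentially step for step: the same evaluation extension with $u[t]=\tau>0$, the same images of the $\widehat J$, of $K$ and of $H$, and the same exclusion of finitely many values of $\tau$ for the two shifted representatives. Your explicit check that the basis vectors $[\widetilde f_i]$ evaluate to zero at $\vc c$ only spells out what the paper compresses into the remark that the $\widehat J$ images lie in lines of slopes $1,c_1,\ldots,c_n,2$.
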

\end{samepage}

When the system has a rational solution, Lemma~\ref{lem:pointwise-completion} supplies a separator for every outside pair; Lemma~\ref{lem:finite-subcollection} selects a finite list, completing the first reduction.

\section{From finite separation to stratifiability}\label{sec:geometry}
Let $D_1,\ldots,D_k\le\Q^r\times\Q^r$ be an instance of finite separation, with $u\times v$ as in Section~\ref{sec:transfer}. We construct a semilinear set that is stratifiable exactly when this instance admits a finite separating family.

To construct the semilinear output, we represent an integer coordinate as a difference of two nonnegative coordinates. We keep all $x$ positions before all $y$ positions so that the two forms act on consecutive groups of letter blocks. Within each entry, the positive and negative blocks are indexed in the same order, so matching pairs for distinct indices alternate; this is the arrangement used to extract the forms from a stratification (Lemma~\ref{lem:period-forms}). We therefore order the coordinates as
\[
 (x_1^+,\ldots,x_r^+,\ x_1^-,\ldots,x_r^-,\
   y_1^+,\ldots,y_r^+,\ y_1^-,\ldots,y_r^-).
\]
Define the signed-coordinate map $\mathsf{A}\colon\Q^{4r}\to\Q^{2r}$ and the sets $Y,S$ by
\begin{equation}\label{eq:observed}
 \begin{aligned}
 Y&=\Q^{2r}\setminus\bigcup_{i=1}^k D_i,\\
 \mathsf{A}(\vc{x}^+,\vc{x}^-,\vc{y}^+,\vc{y}^-)&=(\vc{x}^+-\vc{x}^-,\vc{y}^+-\vc{y}^-),
 \qquad S=\mathsf{A}^{-1}(Y)\cap\N^{4r}.
 \end{aligned}
\end{equation}
The set $Y$ consists of the rational pairs outside the deleted union. The set $S$ contains every nonnegative integer representation of each integer pair in $Y$. These representations are not unique: for example, $(x_i^+,x_i^-)=(1,0)$ and $(2,1)$ both represent $x_i=1$. More generally, increasing matching positive and negative coordinates by the same nonnegative integer leaves their difference unchanged. Keeping every representation supplies the grids of cancellation directions used to extract separators from a stratification.

\begin{lemma}\label{lem:semilinearity}\prooflink{lem:semilinearity}
The set $S$ in \eqref{eq:observed} is semilinear. From rational spanning vectors for each $D_i$, one can compute a finite semilinear presentation of $S$.
\end{lemma}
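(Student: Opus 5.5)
We show that $S$ is defined by a Presburger formula that can be computed from the input, and then apply the effective conversion of Ginsburg and Spanier \cite[Theorem~1.3]{GS66} to obtain a semilinear presentation.

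First we describe each subspace by equations. From the rational spanning vectors of $D_i$, Gaussian elimination computes a basis of its annihilator: finitely many rational linear forms $\ell_{i,1},\ldots,\ell_{i,s_i}$ on $\Q^{2r}$ with
\[
 D_i=\{\vc w:\ell_{i,1}(\vc w)=\cdots=\ell_{i,s_i}(\vc w)=0\}.
\]
If $D_i$ is the whole space, then $s_i=0$ and the conjunction is empty, so $D_i$ is defined by the formula true. We multiply each $\ell_{i,j}$ by a common denominator of its coefficients. This gives integer-coefficient forms with the same zero set.

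Next we compose with the signed-coordinate map. Each $\ell_{i,j}\circ\mathsf A$ is an integer linear form on $\Q^{4r}$. For $\vc n\in\N^{4r}$, its value is an integer linear combination of the coordinates of $\vc n$. We move the negative coefficients to the other side, so that the equation $\ell_{i,j}(\mathsf A\vc n)=0$ becomes an equality between two nonnegative-coefficient sums. This is a Presburger atomic formula $\varphi_{i,j}(\vc n)$. Since $\mathsf A\vc n$ is an integer vector lying in $\Q^{2r}$, membership of $\vc n$ in $S$ is equivalent to
\[
 \bigwedge_{i=1}^k\ \neg\bigwedge_{j=1}^{s_i}\varphi_{i,j}(\vc n),
\]
interpreted over $\N^{4r}$. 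This is a quantifier-free Presburger formula, built effectively from the input.

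Finally, \cite[Theorem~1.3]{GS66} says that Presburger-definable subsets of $\N^{4r}$ are exactly the semilinear ones, and that the conversion from formulas to presentations is effective. This yields both semilinearity and a computable finite presentation.

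None of these steps is a real obstacle. The only points needing care are that negation is allowed in the formula, which Presburger arithmetic permits and which the effective conversion handles, and the degenerate cases $D_i=\Q^{2r}$ and $D_i=0$. In the first case the conjunction is empty and true, so its negation contributes false and $S=\emptyset$. The empty set is semilinear, as the empty union of linear sets.
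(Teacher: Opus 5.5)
Your proof is correct and follows essentially the same route as the paper. Both compute integer equations for each $D_i$ by Gaussian elimination and clearing denominators, pull them back along $\mathsf{A}$, write membership in $S$ as a quantifier-free Presburger formula (a conjunction over $i$ of disjunctions of nonvanishing conditions), and invoke the effective conversion of \cite[Theorem~1.3]{GS66}; your handling of the empty cases matches the paper's conventions.
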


To obtain a stratification from a finite separating family, we first consider the signed representations of pairs separated by one fixed separator $u\times v$:
\begin{equation}\label{eq:separated-region}
 S_{u,v}=\left\{\vc a\in\N^{4r}:
 (u\times v)(\mathsf{A}\vc a)\notin\bigcup_{i=1}^k (u\times v)(D_i)\right\}.
\end{equation}
Every $S_{u,v}$ is contained in $S$.

\begin{lemma}\label{lem:separated-region}\prooflink{lem:separated-region}
The set $S_{u,v}$ is stratifiable. Consequently, a finite family of separators whose sets $S_{u,v}$ cover $S$ gives a stratification of $S$.
\end{lemma}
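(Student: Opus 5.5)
The plan is to show that $L(S_{u,v})$ is context-free and then invoke Theorem~\ref{thm:GS}. Since $S_{u,v}$ depends only on the two integer-weighted sums, we first clear denominators. Multiply $u$ and $v$ by positive integers so that their coefficient vectors $\vc c,\vc d$ are integral. This rescaling changes neither the separated pairs nor the images up to a fixed linear isomorphism of $\Q^2$, so $S_{u,v}$ is unchanged. For $\vc a=(\vc x^+,\vc x^-,\vc y^+,\vc y^-)$ put
\[
 \sigma(\vc a)=\sum_i c_i(x_i^+-x_i^-),\qquad \pi(\vc a)=\sum_i d_i(y_i^+-y_i^-).
\]
Each image $(u\times v)(D_i)$ is a rational subspace of $\Q^2$: the zero space, a line, or the plane. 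The plane is the whole image and separates nothing; if some $D_i$ maps onto it, $S_{u,v}$ is empty and hence trivially stratifiable. Otherwise, for integer arguments $(\sigma,\pi)$, membership in each image is an integer linear equation $\alpha_i\sigma+\beta_i\pi=0$, or the pair of equations $\sigma=\pi=0$. Hence
\[
 S_{u,v}=\{\vc a\in\N^{4r}: (\sigma(\vc a),\pi(\vc a))\in Z\}
\]
for a Presburger-definable set $Z\subseteq\Z^2$, namely the complement of finitely many lines through the origin. The value $\sigma$ is determined by the first $2r$ blocks and $\pi$ by the last $2r$ blocks, and these two groups of blocks are consecutive.

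Next we appeal to Lemma~\ref{lem:weighted}, announced in the overview. It says that a letter-bounded language defined by a Presburger condition on two integer-weighted sums over consecutive block groups is context-free. If the lemma were not available, I would prove it directly with a pushdown automaton. The automaton reads $a_1^{n_1}\cdots a_{2r}^{n_{2r}}$ and keeps $\sigma$ as a signed stack height. It then reads the $y$-blocks and computes $\pi$ in its finite control modulo the moduli appearing in $Z$, comparing against the stack. The difficulty is that the conditions $\alpha\sigma+\beta\pi\neq0$ compare a weighted $\sigma$ with a weighted $\pi$.

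A cleaner route avoids automata altogether. A condition $\alpha\sigma+\beta\pi=0$ says that the single linear form $\alpha\sigma+\beta\pi$ vanishes. Its complement splits into the two half-spaces $\alpha\sigma+\beta\pi>0$ and $\alpha\sigma+\beta\pi<0$. Distributing the finitely many conditions yields a finite union of sets, each defined by finitely many strict linear inequalities in $(\sigma,\pi)$, together with the coordinatewise conditions from the zero-space images.

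I would then show stratifiability of each piece directly, by exhibiting a semilinear presentation whose periods are of two kinds:
\begin{itemize}
\item unit vectors, and
\item vectors with two nonzero coordinates, both inside one entry's $2r$ blocks or joining an $x$-block to a $y$-block.
\end{itemize}
I expect this to be the main obstacle. The cross periods joining $x$-positions to $y$-positions can alternate with the within-entry periods, so the non-crossing condition must be checked carefully. Relying on Lemma~\ref{lem:weighted} together with Theorem~\ref{thm:GS} sidesteps this check, and that is the route I would take.

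For the consequence, suppose finitely many separators give sets $S_{u_j,v_j}$ covering $S$. Since each $S_{u_j,v_j}\subseteq S$, their union equals $S$. Concatenating the stratifications of the $S_{u_j,v_j}$ gives a finite presentation of $S$ whose period sets are all stratified, which is a stratification of $S$.
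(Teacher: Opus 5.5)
Your proposal is correct and matches the paper's proof. Both clear the denominators of $u,v$, express membership in $S_{u,v}$ as a Presburger condition on two integer-weighted sums over the consecutive $\vc x$ and $\vc y$ block groups, apply Lemma~\ref{lem:weighted} and Theorem~\ref{thm:GS}, and obtain the consequence by taking the union of the stratifications of the covering sets $S_{u_j,v_j}\subseteq S$ (the pushdown-automaton and direct-stratification digressions are unnecessary, and you rightly set them aside).
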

\begin{proof}[Proof sketch]
After clearing the coefficient denominators of $u,v$, membership in $S_{u,v}$ is a Presburger condition on two integer-weighted sums over consecutive coordinate blocks. Lemma~\ref{lem:weighted} and Theorem~\ref{thm:GS} give stratifiability.
\end{proof}

\begin{lemma}\label{lem:finite-necessity}\prooflink{lem:finite-necessity}
If $S$ is stratifiable, there is a finite list of separators such that every rational pair in $Y$ is separated from $D_1,\ldots,D_k$ by some member of the list.
\end{lemma}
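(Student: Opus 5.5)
Fix a stratification $S=\bigcup_{c}(\vc a_c+P_c^*)$ with stratified period sets $P_c$. For each component $c$, let $W_c\le\Q^{2r}$ be the rational span of the signed images $\mathsf A\vc t$, $\vc t\in P_c$. The plan is to produce one separator per component and show that this finite list works. I will follow the structure sketched in the overview.

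\textbf{Step 1: the relevant components.} Fix an integer pair $\vc w\in Y$. Its representations contain the grid $\vc a+\sum_i n_i(\vc e_{x_i^+}+\vc e_{x_i^-})+\sum_i m_i(\vc e_{y_i^+}+\vc e_{y_i^-})$, all lying in $S$. Among finitely many components covering an infinite grid, I use a Dickson/minimal-element argument in the style of \cite[proof of Theorem~1.3]{GS66}. It should show that some component contains a sub-grid with all $n_i,m_i$ large and independent. Such a component then has nonnegative period combinations whose signed images vanish and which span all $2r$ cancellation directions. Call such components \emph{rich}.

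\textbf{Step 2: extracting forms (the main obstacle).} For a rich component $c$, consider the periods appearing in these cancelling combinations. Since these combinations cover the directions $\vc e_{x_i^+}+\vc e_{x_i^-}$, every index $i$ must be linked, through chains of two-coordinate periods, from the $x_i^+$ block to the $x_i^-$ block. With the chosen ordering $x^+,x^-,y^+,y^-$, two such links for distinct $i<i'$ would have supports $\{x_i^+,x_i^-\}$-type and $\{x_{i'}^+,x_{i'}^-\}$-type positions that alternate ($i<i'<r+i<r+i'$). This is forbidden by condition~2 of stratification. The same holds across the $x/y$ divide.

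I expect this to yield the following structure. The annihilator of the span $W'_c$ of the images of these periods consists of forms $\alpha u_c(\vc x)+\beta v_c(\vc y)$ for fixed forms $u_c,v_c$, which I pick nonzero. Equivalently, $\ker u_c\times\ker v_c\le W'_c\le W_c$. Making this rigorous is the delicate case analysis, following Brough's description of stratified period structure \cite{Brough}. My first attempt is induction on the nesting of period supports: each period of the cancelling combinations is of the form $x_i^\pm\!-\!x_j^\pm$, $x_i^\pm\!-\!y_j^\pm$, and so on, and non-alternation forces the graph of "crossing" links to be essentially a single chain. That chain gives at most one independent linear constraint per entry.

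\textbf{Step 3: the list separates everything.} Take the separators $u_c\times v_c$ over rich components $c$. Suppose some rational $\vc w\in Y$ is separated by none of them. For each rich $c$ and each $i$, the failure of separation by $u_c\times v_c$ means $\vc w-\vc d_{c,i}\in\ker u_c\times\ker v_c\le W_c$ for some $\vc d_{c,i}\in D_i$. Fix a rational expression of each such difference as a combination of period images.

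Choose a common integer $N$ clearing all denominators of these coefficients and of $\vc w$. Then apply Step 1 to the integer pair $N\vc w$, using representations with huge base coordinates so that adding the cancelling combinations keeps everything nonnegative. This gives a rich component $c$ containing a representation $\vc b$ of $N\vc w$ with arbitrary large slack in the period coefficients. Subtracting the integer combination corresponding to $N(\vc w-\vc d_{c,i})$ stays inside $\vc a_c+P_c^*$, after compensating negative coefficients with the slack from the cancelling combinations, which have zero signed image. The resulting element of $S$ represents $N\vc d_{c,i}\in D_i$, contradicting $S=\mathsf A^{-1}(Y)\cap\N^{4r}$. Hence every rational $\vc w\in Y$ is separated by some $u_c\times v_c$, and the list is finite.
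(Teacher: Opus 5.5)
Your Steps~1 and~3 follow the paper's architecture. Step~1 is Lemma~\ref{lem:fibers}. Step~3 is the argument of Lemma~\ref{lem:necessity}: fix expressions for every candidate component, clear all denominators by one $N$, and only then select a component representing $N\vc w$.

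The genuine gap is Step~2, which carries the real content of the lemma. You do not prove $\ker u_c\times\ker v_c\subseteq W'_c$, and the reasoning you sketch is incorrect.
\begin{itemize}
\item Condition~2 of stratification forbids two \emph{periods} with alternating supports, not two \emph{chains}. For $r=2$, periods supported on $\{x_1^+,x_2^-\}$, $\{x_2^+,x_1^-\}$ and $\{x_1^-,x_2^-\}$ are pairwise non-alternating. Yet they link $x_1^+$ to $x_1^-$ and $x_2^+$ to $x_2^-$ inside one component.
\item Your two claims together (richness links every index; links for distinct indices are forbidden) would rule out rich components for $r\ge2$, contradicting Step~1.
\item The first claim is false on its own. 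Singleton periods $\vc e_{x_i^+}$ and $\vc e_{x_i^-}$ produce the direction $\vc e_{x_i^+}+\vc e_{x_i^-}$ with no chain at all.
\end{itemize}
So neither ``every index is linked'' nor ``the crossing links form a single chain'' is the right invariant, and the proposed induction on nesting has no correct statement to induct on.

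The missing idea (Lemmas~\ref{lem:ordered-graph} and~\ref{lem:period-forms}) is to dualize and argue with connected components of the support graph.
\begin{itemize}
\item Let $R$ be the span of the periods occurring in zero-image combinations, and $\mathcal H$ the space of forms on $\Q^{4r}$ vanishing on $R$. Since $K_{\mathsf A}\subseteq R$, every element of $\mathcal H$ has the shape $(\vc\alpha,-\vc\alpha,\vc\beta,-\vc\beta)$.
\item Each period is supported in a single component, so restricting an element of $\mathcal H$ to a component keeps it in $\mathcal H$. On one component the restrictions form a space of dimension at most one, since each edge fixes the ratio of its two coefficients.
\item Hence a position carrying a nonzero coefficient lies in the same component as its signed partner.
\item Non-crossing edges force non-crossing components. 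For $i<j$ the positions $x_i^+<x_j^+<x_i^-<x_j^-$ alternate, so all such $\vc x$-positions lie in one component. Thus the vectors $\vc\alpha$ span at most a line $\Q u$, and likewise the $\vc\beta$ span at most $\Q v$.
\item Every form vanishing on $\mathsf A(R)$ is then $c\,u+e\,v$, which gives $\ker u\times\ker v\subseteq\mathsf A(R)$.
\end{itemize}

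Two smaller repairs are also needed.
\begin{itemize}
\item Step~1 should conclude only that $K_P=K_{\mathsf A}$ for some component meeting the fiber. This follows from Dickson's lemma plus a point count showing that finitely many translates of proper subspaces cannot cover the grid. It does not give a sub-grid with independent coordinates, since a component can meet the grid only in a cone such as $n_1\ge n_2$.
\item In Step~3, failure of separation by $u_c\times v_c$ gives \emph{some} $i$ depending on $c$, not one for each $i$. Also, the rational expression must use only images of periods occurring in zero-image combinations ($W'_c$, not $W_c$). Only for those is $-\mathsf A\vc t$ again the image of a nonnegative period combination; the coefficients of the remaining periods are bounded on each fiber, so no slack is available for them.
\end{itemize}
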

\begin{proof}[Proof sketch]
Adding the same amount to a positive coordinate and its matching negative coordinate preserves the represented pair. Thus the signed representations of an outside integer pair contain a grid of independent cancellation directions. In a finite stratification, some component meeting this grid has nonnegative period combinations with image zero under $\mathsf{A}$ that span all those directions.

For each such component, let $L$ be the rational span of the images under $\mathsf{A}$ of periods occurring in these combinations. These period images can be added and subtracted within the component's image. The support restriction gives nonzero forms $u,v$ with $\ker u\times\ker v\subseteq L$. Fix one such separator for every component with the spanning property. A failure to separate a rational pair $\vc w\in Y$ from $D_i$ implies $\vc w\in D_i+L$. Choose one expression as a point of $D_i$ plus a rational combination of those period images for each candidate component and each $D_i$ whenever possible. Clear all these coefficients and the coordinates of $\vc w$ by one positive integer multiple before selecting a component representing that multiple. Subtracting the resulting integer combination would put a point of $D_i$ in the component's image under $\mathsf{A}$, contradicting its containment in $Y$.
\end{proof}

\begin{theorem}[Second reduction]\label{thm:separation-transfer}
There is a computable many-one reduction from finite separation by Cartesian products of rational linear forms to stratifiability. On input $D_1,\ldots,D_k$, it outputs a finite semilinear presentation of the set $S$ in \eqref{eq:observed}.
\end{theorem}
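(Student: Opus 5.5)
The reduction is the map $D_1,\ldots,D_k\mapsto S$ defined in \eqref{eq:observed}. We verify three things: that it is computable, that it sends yes-instances to stratifiable sets, and that it sends no-instances to non-stratifiable sets.

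\emph{Computability.} Lemma~\ref{lem:semilinearity} directly gives an algorithm. From rational spanning vectors for the $D_i$, it produces a finite semilinear presentation of $S$. So the output is a valid instance of stratifiability and is computed effectively.

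\emph{Yes-instances map to stratifiable sets.} Suppose $u_1\times v_1,\ldots,u_h\times v_h$ is a finite separating family for $D_1,\ldots,D_k$. Take any $\vc a\in S$. Then $\mathsf A\vc a$ is an integer pair lying in $Y$, so some $u_j\times v_j$ separates it, and therefore $\vc a\in S_{u_j,v_j}$. Since every $S_{u,v}\subseteq S$, this shows $S=\bigcup_j S_{u_j,v_j}$. Each $S_{u_j,v_j}$ is stratifiable by Lemma~\ref{lem:separated-region}. Concatenating their stratifications gives a finite semilinear presentation of $S$ whose period sets are stratified, so $S$ is stratifiable. This is exactly the ``consequently'' clause of that lemma.

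\emph{Stratifiable outputs come from yes-instances.} Suppose $S$ is stratifiable. Lemma~\ref{lem:finite-necessity} then supplies a finite list of separators separating every rational pair in $Y$ from $D_1,\ldots,D_k$, which is precisely a positive answer to finite separation.

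Combining the two directions, $S$ is stratifiable if and only if the instance has a finite separating family. All the substance sits in the cited lemmas, so no step is a genuine obstacle here. The only point needing care is the quantifier mismatch: finite separation asks about all \emph{rational} pairs in $Y$, whereas $S$ only encodes \emph{integer} pairs. In the forward direction this causes no trouble, since covering the integer representations is all that stratifiability requires. In the backward direction, Lemma~\ref{lem:finite-necessity} already handles arbitrary rational pairs, by scaling them to integer multiples and using that both separation and membership in $Y$ are invariant under nonzero scaling.
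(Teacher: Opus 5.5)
Your proposal is correct and follows the paper's proof essentially verbatim. Computability comes from Lemma~\ref{lem:semilinearity}, the forward direction covers $S$ by the sets $S_{u_j,v_j}$ and invokes Lemma~\ref{lem:separated-region}, and the converse is read off directly from Lemma~\ref{lem:finite-necessity}, which already quantifies over all rational pairs in $Y$ by clearing denominators of the pair and of the chosen coefficients simultaneously.
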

\begin{proof}
Lemma~\ref{lem:semilinearity} computes the required presentation from the subspace generators. If a finite separating family exists, the sets $S_{u,v}$ indexed by its members cover $S$: every represented integer pair lies outside the subspaces and is separated by a member of the family. Lemma~\ref{lem:separated-region} makes their finite union stratifiable. Conversely, if $S$ is stratifiable, Lemma~\ref{lem:finite-necessity} supplies a finite family separating every rational pair outside the subspaces.
\end{proof}

\section{Proof of the Main Theorem}\label{sec:main-proof}
\begin{proof}[Proof of Theorem~\ref{thm:main}]
  Given the input polynomial $f\in\Z[x_1,\ldots,x_n]$, the standard reduction to
  quadratic systems \cite[proof of Theorem~7.5, pp.~262--263]{Davis} computes a
  finite system $\mathcal E$ of rational polynomial equations of degree at most
  two that has a rational solution if and only if $f$ has a rational zero.

Theorem~\ref{thm:quadratic-transfer} computes a family of rational subspaces admitting a finite separating family exactly when $\mathcal E$ has a rational solution. Theorem~\ref{thm:separation-transfer} then computes a finite semilinear presentation of a set $S$ that is stratifiable exactly when that separating family exists. Thus $S$ is stratifiable exactly when $f$ has a rational zero.
\end{proof}

\begin{corollary}[Conditional consequence]\label{cor:undecidable}
If the undecidability claim in \cite[Theorem~1.1]{H10} is correct, general stratifiability of semilinear sets is undecidable. Equivalently, there is no algorithm that, from a semilinear set $S$, decides whether the letter-bounded language $L(S)$ is context-free.
\end{corollary}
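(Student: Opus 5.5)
This follows by chaining Theorem~\ref{thm:main} with Theorem~\ref{thm:GS}. Suppose, for contradiction, that some algorithm decides stratifiability of a semilinear set given by a finite semilinear presentation. For an integer polynomial $f$, first run the algorithm of Theorem~\ref{thm:main}. It computes a finite semilinear presentation of a set $S$ that is stratifiable exactly when $f$ has a rational zero. Feeding this presentation to the assumed decision procedure then decides whether $f$ has a rational zero. This contradicts the claim of \cite[Theorem~1.1]{H10}, assumed correct, that rational polynomial solvability is undecidable. The only point to check is that the reduction is total and computable, with output in the input format the decision procedure expects. Theorem~\ref{thm:main} guarantees exactly this: its output is a finite list of bases and periods.

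For the equivalent language-theoretic formulation, I would use Theorem~\ref{thm:GS}. For any $S\subseteq\N^d$ and distinct letters $a_1,\ldots,a_d$, it states that $L(S)$ is context-free if and only if $S$ is stratifiable. Hence an algorithm deciding context-freeness of $L(S)$ from a semilinear presentation of $S$ is literally an algorithm deciding stratifiability. The converse holds equally, so the two nonexistence statements are equivalent. The letters are fixed by $d$, which can be read off the presentation, so no further encoding is needed.

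No step is a genuine obstacle; the corollary is bookkeeping. The one subtlety is the input representation. ``From a semilinear set $S$'' has to mean from a finite semilinear presentation, or equivalently from a Presburger formula. These two representations are effectively interconvertible by \cite[Theorem~1.3]{GS66}, so the undecidability conclusion holds for either.
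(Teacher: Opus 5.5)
Your proposal is correct and follows the paper's own proof: assume the claim of \cite[Theorem~1.1]{H10}, use the reduction of Theorem~\ref{thm:main} to transfer undecidability to stratifiability, and then use Theorem~\ref{thm:GS} to obtain the letter-bounded context-freeness formulation. Your added remark that Presburger formulas and semilinear presentations are effectively interconvertible \cite[Theorem~1.3]{GS66} is a harmless clarification that the paper leaves implicit.
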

\begin{proof}
Under the stated hypothesis, rational polynomial solvability is undecidable. Apply Theorem~\ref{thm:main}, and then Theorem~\ref{thm:GS} for the language formulation.
\end{proof}

\section{Conclusion}\label{sec:conclusion}
We have given a computable many-one reduction from rational polynomial
solvability to stratifiability of semilinear sets and, equivalently, to
context-freeness of letter-bounded languages specified by semilinear exponent
sets. If the undecidability claim in \cite{H10} is correct, both decision
problems are undecidable.

Let $H_{10}(\N)$ denote the problem of deciding whether a polynomial with
integer coefficients has a zero in nonnegative integers. Stratifiability is
computably enumerable: enumerate candidate stratifications and check equality
with the input set in Presburger arithmetic. The
Davis--Putnam--Robinson--Matiyasevich theorem \cite{Davis} therefore gives a
computable many-one reduction from stratifiability to $H_{10}(\N)$. Can one
prove an unconditional computable many-one reduction in the opposite
direction, from $H_{10}(\N)$ to stratifiability? This would establish
undecidability independently of \cite{H10}.
Adapting the proof of Lemma~\ref{lem:pointed-quadratic} to $H_{10}(\N)$
would require a mechanism forcing the recovered coordinates to be
nonnegative integers. A fixed system of polynomial equations with rational
unknowns expressing membership in $\N$ would provide such a mechanism;
finding one is equivalent to the longstanding open problem of existentially
defining $\Z$ in $\Q$ \cite{Poonen}.

Our result places no fixed bound on the dimension of the semilinear set.
For which fixed dimensions is stratifiability decidable? Every semilinear
subset of $\N^2$ is stratifiable, since both restrictions on periods hold
automatically. Dimension three is the first nontrivial case:
$\{(n,n,n):n\in\N\}$ is not stratifiable, since any nonzero period of a
linear set contained in it has all three coordinates nonzero. Is
stratifiability decidable for arbitrary semilinear subsets of $\N^3$?

If \cite[Corollary~7.1]{H10} is correct, stratifiability and rational
polynomial solvability are Turing equivalent: each can be decided using
oracle queries to the other problem. This does not settle the stronger
many-one converse, which the cited draft does not establish. Can one
effectively associate a single integer polynomial $f_S$ with every
semilinear presentation of a set $S$, so that $f_S$ has a rational zero
exactly when $S$ is stratifiable?

\parhead{AI disclosure.}
The ideas and writing in this paper were developed in conjunction with ChatGPT-6 Astra. The author takes full responsibility for the correctness and presentation of the work.

\clearpage
\appendix
\section{Proofs for the first reduction}\label{sec:first-proofs}
We first prove the finite-subcollection lemma for an arbitrary finite family of rational subspaces. For the subsequent proofs, the subspaces, bases, and coefficient notation are those of Section~\ref{sec:transfer}: $\mathcal L=\{\Delta,M_1,\ldots,M_n,E\}$, $\mathcal D=\{\widehat{L}:L\in\mathcal L\}\cup\{H,K\}$ is its enlarged family, and $Y$ consists of the pairs outside $\bigcup_{D\in\mathcal D}D$.

\subsection{Selecting finitely many separators}\label{app:finite-subcollection}
\begin{proof}[Proof of Lemma~\ref{lem:finite-subcollection}]\proofback{lem:finite-subcollection}
For each separator $T=u\times v$ in the given collection, failure to separate from $D_i$ means membership in the rational subspace
\[
 B_{T,i}=T^{-1}(T(D_i)).
\]
Choose rational linear forms $\ell_{T,i,j}$ whose simultaneous vanishing defines $B_{T,i}$, and define the polynomial
\[
 h_T(\vc z)=\prod_{i=1}^k\left(\sum_j\ell_{T,i,j}(\vc z)^2\right)
 \qquad(\vc z\in\Q^{2r}).
\]
A sum of rational squares is zero exactly when every summand is zero. Consequently, $h_T(\vc z)\ne0$ exactly when $T$ separates $\vc z$ from every $D_i$. If $B_{T,i}$ is the whole space, use no equations for it: the empty sum is zero, as required. For $k=0$, the empty product is $1$.

The polynomials $h_T$ have degree at most $2k$ in the pair coordinates, so their span is finite-dimensional. Choose a basis of this span from the polynomials $h_T$ themselves, and retain their separators. If all the retained polynomials vanish at a pair, every $h_T$ vanishes there; conversely, a nonzero retained value gives separation by a member of the given collection. These finitely many separators therefore separate exactly the same pairs. When the span is zero, the empty subcollection suffices.
\end{proof}

\subsection{Recovering a root before enlargement}\label{app:original-recovery}
\begin{proof}[Proof of Lemma~\ref{lem:pointed-quadratic}]\proofback{lem:pointed-quadratic}
Suppose $u\times v$ separates $\vc e$. Its image is $(0,v[1])$, so $v[1]\ne0$ because every subspace image contains zero. The image of $\Delta$ is proper and contains $(u[1],v[1])$; it is therefore a line avoiding $(0,v[1])$. Hence $u[1]\ne0$, and every $(u[p],v[p])$ is proportional to $(u[1],v[1])$. Dividing $u$ and $v$ by their respective values at $1$ gives a common form $\lambda$ with $\lambda[1]=1$. This invertible rescaling of the two scalar coordinates preserves separation.

The proper image of $M_j$ contains $(1,\lambda[x_j])$, so
\[
 \lambda[x_jp]=\lambda[x_j]\lambda[p]
 \qquad(p\in\Q[\vc x]_{\le1}).
\]
Put $\vc c=(\lambda[x_1],\ldots,\lambda[x_n])$. Taking $p=x_k$ shows that $\lambda$ agrees with evaluation at $\vc c$ on every quadratic monomial, hence on every polynomial of degree at most two. The image of $E$ contains $(1,2)$ and $(f_i(\vc c),f_i(\vc c))$; their determinant is $-f_i(\vc c)$, so properness forces $f_i(\vc c)=0$ for every $i$.

Conversely, evaluation at a rational solution gives a nonzero form $\lambda$ with $\lambda[1]=1$. For $u=v=\lambda$, the images of $\Delta,M_j,E$ lie in lines of slopes $1,c_j,2$, respectively, all avoiding the image $(0,1)$ of $\vc e$.
\end{proof}

\subsection{Why \texorpdfstring{$[x,1]$}{[x,1]} cannot be separated before enlargement}\label{app:pointed-example}
\begin{proof}[Verification of Example~\ref{ex:pointed-insufficient}]\proofback{ex:pointed-insufficient}
For $f=x^2-1$, consider the three subspaces
\[
 \Delta=\Span\{[1,1],[x,x],[x^2,x^2]\},\qquad
 M=\Span\{[1,x],[x,x^2]\},\qquad
 E=\Span\{[1,2],[f,f]\}.
\]
The separator $\lambda\times\lambda$, where $\lambda[p]=p(1)$, separates $\vc e=[0,1]$ from this family, but the outside pair $[x,1]$ cannot be separated.

Indeed, $[x,1]$ is not in $\Delta$, its second entry is not $x$ times its first, and the first entry of every pair in $E$ has zero $x$ coefficient. Thus it lies outside all three spaces.

Suppose nonzero forms $u,v$ separated $[x,1]$. The image of $\Delta$ cannot be the whole plane, so $u,v$ are proportional. Both are nonzero; rescaling one of them therefore gives a common form $\lambda$ and preserves separation. Put
\[
 a=\lambda[1],\qquad b=\lambda[x],\qquad c=\lambda[x^2].
\]
The image of $M$ is generated by $(a,b)$ and $(b,c)$, and the image of $E$ by $(a,2a)$ and $(c-a,c-a)$. Neither image can be the whole plane. Their determinants consequently give
\[
 ac=b^2,\qquad a(c-a)=0.
\]
If $a=0$, then $b=0$, so the image $(b,a)$ of $[x,1]$ is zero and belongs to every image. If $a\ne0$, then $c=a$ and $b=\pm a$. When $b=a$, $(b,a)$ belongs to the diagonal image of $\Delta$. When $b=-a$, it is the negative of $(a,b)$ and belongs to the image of $M$. Every case contradicts separation.
\end{proof}

\subsection{Recovery in the enlarged space}\label{app:enlarged-recovery}
\begin{proof}[Proof of Lemma~\ref{lem:recover-pointed}]\proofback{lem:recover-pointed}
A pair in $\widehat{L}$ with all added second coordinates zero has every generator coefficient zero, so it cannot equal $\vc e$. In $H$, a nonzero constant term in the second entry comes with a nonzero coefficient of $t$ in the first; in $K$, the constant term in the second entry is zero. Thus $\vc e$ lies outside every member of $\mathcal D$.

Suppose $u\times v$ separates $\vc e$. Its image is $(0,v[1])$, with $v[1]\ne0$. If $u[t]=0$, the generator $[t,1]\in H$ would have the same image, so $u[t]\ne0$. The image of $K$ therefore contains the nonzero horizontal vector $(u[t],0)$. Any nonzero $v[b_{L,d}]$ would make this image the whole plane, preventing separation. Consequently every $v[b_{L,d}]=0$, and
\[
 (u\times v)[p_{L,d},q_{L,d}+b_{L,d}]
   =\bigl(u[p_{L,d}],v[q_{L,d}]\bigr).
\]
Thus $(u\times v)(\widehat{L})=(u\times v)(L)$ for every $L\in\mathcal L$. The restriction of $v$ to $\Q[\vc x]_{\le2}$ is nonzero since $v[1]\ne0$. If $u[1]=0$, the generator $[1,1+b_{\Delta,0}]$ would have the same image as $\vc e$; hence the restriction of $u$ is also nonzero. These restrictions separate $\vc e$ from $\mathcal L$.
\end{proof}

\subsection{Why equal forms do not suffice}\label{app:equal-forms}
\begin{proof}[Verification of Example~\ref{ex:equal-forms}]\proofback{ex:equal-forms}
A pair in $\widehat{L}$ with zero-valued added second coordinates must be zero, while the first entry of every pair in $H$ or $K$ has zero coefficient at $1$. Thus $[1,1]$ lies outside every member of $\mathcal D$. If $u[b_{\Delta,0}]=0$, the generator $[1,1+b_{\Delta,0}]$ has the same image under $u\times u$ as $[1,1]$. If $u[b_{\Delta,0}]\ne0$, the image of $K$ contains the nonzero diagonal vector $(u[b_{\Delta,0}],u[b_{\Delta,0}])$, and hence the image $(u[1],u[1])$ of $[1,1]$. Neither case permits separation.
\end{proof}

\subsection{Separation by variable image slopes}\label{app:slope-separation}
We prove the preliminary classification~\eqref{eq:coarse-residuals} used in Lemma~\ref{lem:root-free-separation}, including its restrictions on $\alpha,\beta,q$.

\begin{proof}\proofback{coarse-classification}
Under the standing assumption $1\notin\Span_{\Q}\{f_1,\ldots,f_m\}$, each coordinate projection of every $L\in\mathcal L$ is injective; this will let us recover generator coefficients from either entry of a pair. An element of $E$ can be written as $[a+h,2a+h]$, where $a\in\Q$ and $h\in\Span_{\Q}\{f_1,\ldots,f_m\}$. If either entry is zero, then $h=-a$ or $h=-2a$ is constant. The only constant in the equation span is zero, so $h=0$ and $a=0$. Each coordinate projection of $E$ is therefore injective. The same property is immediate for $\Delta$ and for the first projection of $M_j$; the second projection of $M_j$ is injective because multiplication by $x_j$ is injective. Consequently, for each $L$, the coefficient vectors $[p_{L,d}]$ are linearly independent, as are the vectors $[q_{L,d}]$.

\emph{Pairs whose second entry has a nonzero coefficient of $t$.}
First take $u$ to extract the coefficient of $1$ and $v$ to extract the coefficient of $t$. Every pair in every member of $\mathcal D$ has zero coefficient of $t$ in its second entry, so these forms separate every pair whose second entry has a nonzero coefficient of $t$. Every pair whose second entry has zero coefficient of $t$ can be written as
\begin{equation}\label{eq:pair-normal-form}
 \vc w=\left[p+\alpha t+\sum_{L,d}\eta_{L,d}b_{L,d},\quad
 q+\sum_{L,d}\theta_{L,d}b_{L,d}\right].
\end{equation}
\emph{Assigning slopes to the modified subspaces.}
To put the image of $K$ into the diagonal and the image of each $\widehat{L}$ into a line of slope $s_L$, choose the values of $u,v$ on the monomial basis of $\Q[\vc x]_{\le2}$ and the slopes $s_L$ as independent parameters, subject to $v[1]=1$, and extend~$u,v$~by
\begin{equation}\label{eq:generic-form-values}
 u[t]=v[t]=0,\qquad
 u[b_{L,d}]=v[b_{L,d}]=s_Lu[p_{L,d}]-v[q_{L,d}].
\end{equation}
Indeed, the modified generator has image $(u[p_{L,d}],s_Lu[p_{L,d}])$. In $K$, the generator $[t,0]$ maps to zero and every $[b_{L,d},b_{L,d}]$ maps to a diagonal vector. The image of $H$ is vertical. The value $v[t]$ does not affect any of these images or the image of $\vc w$, since every second entry under consideration has zero coefficient of $t$.

To compute the image of $\vc w$ in \eqref{eq:pair-normal-form}, put
\[
 p_L^\eta=\sum_d\eta_{L,d}p_{L,d},\qquad
 p_L^\theta=\sum_d\theta_{L,d}p_{L,d},\qquad
 q^\eta=\sum_{L,d}\eta_{L,d}q_{L,d},\qquad
 q^\theta=\sum_{L,d}\theta_{L,d}q_{L,d}.
\]
For fixed $\vc w$, denote its image by $(\rho,\sigma)$. These coordinates are polynomials in the values of $u,v$ on the monomial basis of $\Q[\vc x]_{\le2}$ and in the slopes $s_L$:
\[
 \rho=u[p]+\sum_Ls_Lu[p_L^\eta]-v[q^\eta],\qquad
 \sigma=v[q-q^\theta]+\sum_Ls_Lu[p_L^\theta].
\]
To keep $(\rho,\sigma)$ off the vertical axis, the diagonal, and the lines of slopes $s_J$, it suffices to require $\rho\ne0$, $\sigma-\rho\ne0$, and $\sigma-s_J\rho\ne0$ for every $J\in\mathcal L$. We identify the pairs for which one of these polynomials is zero for every parameter choice. Write $\widehat{J}+\Q[t,0]$ for the pairs $\vc g+\alpha[t,0]$ with $\vc g\in \widehat{J}$ and $\alpha\in\Q$. In the following identities, $\equiv0$ means that a polynomial is identically zero on the parameter space $v[1]=1$:

\begin{align}
 \rho\equiv0&\iff p=0\text{ and }\eta_{L,d}=0\text{ for every }L,d,\label{eq:generic-U}\\
 \sigma-\rho\equiv0&\iff\vc w\in K,\label{eq:generic-K}\\
 \sigma-s_J\rho\equiv0&\iff\vc w\in \widehat{J}+\Q[t,0]
                              \quad(J\in\mathcal L).\label{eq:generic-G}
\end{align}
For \eqref{eq:generic-U}, the coefficients of the independent $s_L$ give $p_L^\eta=0$, hence every $\eta_{L,d}=0$ by independence of the $p_{L,d}$; then $\rho=u[p]$ forces $p=0$. For \eqref{eq:generic-K}, those coefficients give $\theta_{L,d}=\eta_{L,d}$ for every $L,d$, leaving $v[q]-u[p]$. Varying the values of $u$ on that monomial basis independently gives $p=0$, and then $q=0$. Indeed, if $v[q]=0$ for every form with $v[1]=1$, varying its values on the nonconstant monomials shows that every nonconstant coefficient of $q$ is zero; its constant coefficient must then also be zero. The conditions $p=q=0$ and $\theta_{L,d}=\eta_{L,d}$ for every $L,d$ say exactly that $\vc w\in K$.

For \eqref{eq:generic-G}, the coefficients of $s_J^2$ and $s_Js_L$ for $L\ne J$ first force all $p_L^\eta=0$, so every $\eta_{L,d}=0$. The remaining $s_L$ coefficients give $p_L^\theta=0$ for $L\ne J$ and $p_J^\theta=p$. Thus all $\theta_{L,d}$ with $L\ne J$ are zero. The term independent of the slopes is $v[q-q^\theta]$; its vanishing for every form with $v[1]=1$ gives $q=q^\theta$. These conditions say precisely that $\vc w\in \widehat{J}+\Q[t,0]$. The reverse implications in all three identities follow by substitution.

Consequently, if $\vc w\in Y$ has first entry not a scalar multiple of $[t]$ and lies in none of the spaces $\widehat{L}+\Q[t,0]$, then
\begin{equation}\label{eq:generic-product}
 \rho(\sigma-\rho)\prod_{L\in\mathcal L}(\sigma-s_L\rho)
\end{equation}
is a nonzero polynomial. A nonzero rational polynomial has a rational tuple where it is nonzero: this follows by induction on the number of variables from the fact that a nonzero univariate polynomial has only finitely many roots. Choose such a tuple for~\eqref{eq:generic-product}. Then the image of $\vc w$ lies outside every subspace image. The factor $\rho$ and the normalization $v[1]=1$ ensure that both forms are nonzero.

\emph{Making the image of $H$ zero.}
With $v[1]=1$ and $u[t]=0$, every pair whose first entry is $[\alpha t]$ maps into the vertical image of $H$, so these extensions cannot separate it. To separate such a pair, we instead make the image of $H$ zero. Choose new values of $u,v$ on the monomial basis of $\Q[\vc x]_{\le2}$, subject to $u[1]=1$ and $v[1]=0$, and new slopes $s_L$; then recompute their values on the added basis vectors by \eqref{eq:generic-form-values}. Consider a pair whose first entry is $[\alpha t]$ and whose second entry is not a scalar multiple of $[1]$. For these new extensions, the image $(\rho,\sigma)$ has $\rho=0$, and $H$ maps to zero. The polynomial~$\sigma$ is nonzero: if it vanished identically, its $s_L$ coefficients would give $u[p_L^\theta]\equiv0$ for all forms with $u[1]=1$, hence $p_L^\theta=0$ and every $\theta_{L,d}=0$. The remaining identity $v[q]\equiv0$ for forms satisfying $v[1]=0$ would make $q$ constant, a contradiction. Choose a rational solution of $\sigma\ne0$. The pair's image $(0,\sigma)$ is nonzero and avoids the finite-slope lines containing the images of all $\widehat{L},K$. Both forms are nonzero, since $u[1]=1$ and $\sigma\ne0$.

The constructions above prove separation for every pair in $Y$ except possibly those described in~\eqref{eq:coarse-residuals}. For $[\alpha t,\beta]\in Y$, membership outside $K$ gives $\beta\ne0$ and membership outside $H$ gives $\beta\ne\alpha$. If $\vc w=\vc g+\alpha[t,0]\in Y$ with $\vc g\in \widehat{L}$, then $\alpha\ne0$ because $\vc w\notin \widehat{L}$. Write the second entry of $\vc g$ as $[q+\sum_d\theta_db_{L,d}]$, with $q\in\Q[\vc x]_{\le2}$. If $q=\sum_d\theta_dq_{L,d}=0$, injectivity of the second projection of $L$ would force every $\theta_d=0$. Then $\vc g=0$ and $\vc w\in K$, a contradiction.
\end{proof}

\subsection{Completing separation using a rational root}\label{app:root-completion}
\begin{proof}[Proof of Lemma~\ref{lem:pointwise-completion}]\proofback{lem:pointwise-completion}
Fix a rational root $\vc c$. For a positive rational parameter $\tau$, define
\begin{equation}\label{eq:root-extension}
 \begin{aligned}
 u[h]&=v[h]=h(\vc c) &&(h\in\Q[\vc x]_{\le2}),\\
 u[t]&=\tau,\quad v[t]=0,\quad
 u[b_{J,d}]=v[b_{J,d}]=0 &&(\text{every }J,d).
 \end{aligned}
\end{equation}
Both forms are nonzero at $[1]$. The $\widehat{J}$ images lie in lines of slopes $1,c_1,\ldots,c_n,2$, while the image of $K$ is horizontal and $H$ maps to $\Q(\tau,1)$. Thus every nonzero multiple of $\vc e$ has a nonzero vertical image and is separated.

For the other two lines in~\eqref{eq:three-lines}, the representatives have images
\[
 (1+\tau,1),\qquad(\tfrac12+\tau,1).
\]
Both avoid the image of $K$, since their second coordinate is nonzero. Membership in the image $\Q(\tau,1)$ of $H$ would force $1=0$ or $\tfrac12=0$, respectively. For a line of slope $s$ containing a $\widehat J$ image, membership requires $1=s(1+\tau)$ or $1=s(\tfrac12+\tau)$. If $s=0$, neither is possible; otherwise these equations exclude the two values
\[
 \tau=\frac1s-1,\qquad \tau=\frac1s-\frac12.
\]
Choose one positive rational $\tau$ outside the finite set excluded by all the slopes $1,c_1,\ldots,c_n,2$. This separator separates both representatives and hence every nonzero rational multiple of either. It also preserves separation of every nonzero multiple of $\vc e$, proved for every positive $\tau$. Lemma~\ref{lem:root-free-separation} supplies separators for the pairs in $Y$ off these three lines, proving the claim.
\end{proof}

\section{Proofs for the second reduction}\label{sec:geometry-proof}
The subspaces $D_i$, map $\mathsf{A}$, sets $Y,S$, and separators $u\times v$ are as in Section~\ref{sec:geometry}. We first prove effective semilinearity of $S$, then extract finitely many separators from a stratification. Finally, we prove the weighted-sums lemma and apply it to the regions $S_{u,v}$.

\subsection{Computing a semilinear presentation}\label{app:semilinearity}
\begin{proof}[Proof of Lemma~\ref{lem:semilinearity}]\proofback{lem:semilinearity}
Gaussian elimination and denominator clearing compute integer matrices $\mathsf{M}_i$ with
$D_i=\{\vc z\in\Q^{2r}:\mathsf{M}_i\vc z=\vc0\}$. Since $\mathsf{A}$ has integer entries, so does $\mathsf{C}_i=\mathsf{M}_i\mathsf{A}$. If $\mathsf{C}_i$ has $s_i$ rows, then for $\vc n\in\N^{4r}$,
\[
 \vc n\in S\quad\Longleftrightarrow\quad
 \bigwedge_{i=1}^k\bigvee_{j=1}^{s_i}(\mathsf{C}_i\vc n)_j\ne0.
\]
An empty disjunction is false and an empty conjunction is true. This is a Presburger formula, so the effective conversion recalled in Section~\ref{sec:prelim} produces a finite semilinear presentation.
\end{proof}

\subsection{Period combinations with image zero}
Put $d=2r$ and $K_{\mathsf{A}}=\ker \mathsf{A}=\{\vc z\in\Q^{2d}:\mathsf{A}\vc z=\vc 0\}$. For nonzero rational forms $u,v$, define
\begin{equation}\label{eq:C}
 V(u,v)=\ker u\times\ker v,\qquad
 Y_{u,v}=\Q^d\setminus\bigcup_{i=1}^k\bigl(D_i+V(u,v)\bigr).
\end{equation}
Since $\ker (u\times v)=V(u,v)$, we have
\[
 D_i+V(u,v)=(u\times v)^{-1}\bigl((u\times v)(D_i)\bigr).
\]
Thus $Y_{u,v}$ is precisely the set of pairs separated by $u\times v$, and $Y_{u,v}\subseteq Y$. In the notation of \eqref{eq:separated-region},
\[
 S_{u,v}=\mathsf{A}^{-1}(Y_{u,v})\cap\N^{4r}.
\]

To extract separators from a stratification of $S$, we first need a component whose nonnegative period combinations with image zero span $K_{\mathsf{A}}$. For each outside integer pair, the following argument finds such a component containing one of its signed representations.

For $\vc{z}\in\Q^d$, the \emph{fiber} of $\mathsf{A}$ over $\vc{z}$ is $\mathsf{A}^{-1}(\vc{z})=\{\vc{n}\in\Q^{2d}:\mathsf{A}\vc{n}=\vc{z}\}$. To describe the part of a linear set in this fiber, we use the coordinatewise order on $\N^\ell$: $\vc{b}\le\vc{c}$ means $b_i\le c_i$ for every $i$.

\begin{lemma}[Dickson's lemma]\label{lem:dickson}
Every subset of $\N^\ell$ has finitely many minimal elements for the coordinatewise order, and each of its elements is above a minimal element.
\end{lemma}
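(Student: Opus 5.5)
We prove Dickson's lemma by induction on $\ell$, establishing first the key intermediate claim: every infinite sequence $\vc b^{(1)},\vc b^{(2)},\ldots$ in $\N^\ell$ has an infinite subsequence that is nondecreasing in the coordinatewise order. For $\ell=0$ there is nothing to show. For the inductive step, we extract by the induction hypothesis an infinite subsequence nondecreasing in the first $\ell-1$ coordinates. Then, since every infinite sequence in $\N$ has an infinite nondecreasing subsequence, we refine it to be nondecreasing in the last coordinate as well. For the $\N$ case we argue directly: either some value is attained infinitely often, or the values are unbounded and we pick successively larger terms.

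From the claim we derive both assertions of the lemma. Fix $X\subseteq\N^\ell$.

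\emph{Every element lies above a minimal element.} Take $\vc b\in X$ and consider the finite set $\{\vc c\in X:\vc c\le\vc b\}$. It is finite because it is contained in a box of size $\prod_i(b_i+1)$, and it is nonempty. Any element of it with smallest coordinate sum is minimal in $X$, because anything strictly below it would also lie below $\vc b$ and have smaller coordinate sum. This step does not need the claim at all.

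\emph{Finitely many minimal elements.} Suppose instead there were infinitely many distinct minimal elements. We list them as a sequence and apply the claim to obtain indices $i<j$ with $\vc b^{(i)}\le\vc b^{(j)}$. The two elements are distinct, so $\vc b^{(i)}$ lies strictly below $\vc b^{(j)}$, contradicting the minimality of $\vc b^{(j)}$.

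The only real obstacle is the subsequence claim. The point requiring care there is to refine an already chosen subsequence rather than choosing separate subsequences per coordinate, so that the monotonicity in all coordinates holds along one common subsequence. Since this is the classical Dickson lemma, the paper may simply cite it; the above is the self-contained argument I would include.
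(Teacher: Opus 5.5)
Your proposal is correct and follows the paper's route. Like the paper, you obtain a coordinatewise nondecreasing subsequence by refining one coordinate at a time (using the constant-or-increasing dichotomy in $\N$), rule out infinitely many distinct minimal elements from it, and find a minimal element below any given element inside the finite set of elements below it; choosing one of smallest coordinate sum just spells out the paper's remark that a minimal member of that finite set is minimal in the whole subset.
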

\begin{proof}
Every infinite sequence in $\N$ has an infinite nondecreasing subsequence: either one value occurs infinitely often, or each bounded set of values occurs only finitely often and an increasing subsequence can be chosen. Apply this successively to the $\ell$ coordinates. Every infinite sequence in $\N^\ell$ then has two terms in coordinatewise nondecreasing order, so an infinite sequence of distinct minimal elements is impossible. For any $\vc{n}$ in the subset, the set of elements below $\vc{n}$ is finite and nonempty; a minimal member is also minimal in the original subset. The case $\ell=0$ is immediate.
\end{proof}

For a finite period set $P=\{\vc{t}_1,\ldots,\vc{t}_\ell\}\subseteq\N^{2d}$, let $\mathsf{P}$ be the matrix with these columns. Define
\[
 Z_P=\{\vc{n}\in\N^\ell:\mathsf{A}\mathsf{P}\vc{n}=\vc{0}\},\qquad
 K_P=\Span\{\mathsf{P}\vc{n}:\vc{n}\in Z_P\}\subseteq K_{\mathsf{A}}.
\]
Thus $K_P$ is spanned by the nonnegative period combinations that leave the represented pair unchanged. The next proof uses the minimal-element argument for linear-equation solution sets in Ginsburg and Spanier's proof of Theorem~1.3~\cite{GS66}.

\begin{lemma}\label{lem:fibers}
If $S=\bigcup_{j=1}^t(\vc{a}_j+P_j^*)$, then for every $\vc{z}\in Y\cap\Z^d$ some component meeting $\mathsf{A}^{-1}(\vc{z})$ satisfies $K_{P_j}=K_{\mathsf{A}}$.
\end{lemma}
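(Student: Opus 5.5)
The approach is to show that, for a fixed outside integer pair, the components whose spans $K_{P_j}$ are proper subspaces of $K_{\mathsf{A}}$ can cover only a "thin" part of the fiber. The fiber itself contains a full $d$-dimensional grid, so some component with $K_{P_j}=K_{\mathsf{A}}$ must meet it.

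\emph{Step 1: the fiber contains a full-dimensional grid inside $S$.} Fix $\vc z\in Y\cap\Z^d$ and choose one representation $\vc n_0\in\N^{2d}$ with $\mathsf{A}\vc n_0=\vc z$, for instance positive and negative parts. For $1\le i\le d$, let $\vc c_i$ be the vector with $1$ at the $i$th positive coordinate and at its matching negative coordinate, and $0$ elsewhere. Then $\mathsf{A}\vc c_i=\vc 0$. Since $\mathsf{A}$ is surjective onto $\Q^d$, $\dim K_{\mathsf{A}}=d$, and the independent $\vc c_i$ form a basis of $K_{\mathsf{A}}$. Every grid point $\vc n_0+\sum_i m_i\vc c_i$ with $\vc m\in\N^d$ lies in $\N^{2d}$ and maps to $\vc z\in Y$, so it lies in $S$. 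Hence it lies in some component $\vc a_j+P_j^*$, and that component meets $\mathsf{A}^{-1}(\vc z)$.

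\emph{Step 2: each component meets the fiber inside finitely many translates of $K_{P_j}$.} Fix a component $j$ and consider $N_j=\{\vc n\in\N^{\ell_j}:\mathsf{A}(\vc a_j+\mathsf{P}_j\vc n)=\vc z\}$. By Lemma~\ref{lem:dickson}, $N_j$ has finitely many minimal elements, and every $\vc n\in N_j$ lies above some minimal $\vc n'$. The difference $\vc n-\vc n'$ is nonnegative and satisfies $\mathsf{A}\mathsf{P}_j(\vc n-\vc n')=\vc 0$, so it lies in $Z_{P_j}$. Therefore $\mathsf{P}_j\vc n\in\mathsf{P}_j\vc n'+K_{P_j}$. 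Consequently the part of component $j$ in the fiber is contained in the finite union of the affine subspaces $\vc a_j+\mathsf{P}_j\vc n'+K_{P_j}$, one for each minimal $\vc n'$.

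\emph{Step 3: a grid cannot be covered by finitely many proper affine subspaces.} Suppose, for contradiction, that every component meeting the fiber has $K_{P_j}\ne K_{\mathsf{A}}$. Each translate from Step 2 that meets the grid lies in $\vc n_0+K_{\mathsf{A}}$. Transported to grid coordinates $\vc m\in\Q^d$ via $\vc m\mapsto\vc n_0+\sum_i m_i\vc c_i$, it becomes a proper affine subspace. Each such subspace is contained in the zero set of a nonconstant affine form $\phi(\vc m)$. By Steps 1 and 2, the finitely many resulting subspaces cover $\N^d$, so the product of these nonzero affine polynomials vanishes on all of $\N^d$. A nonzero polynomial cannot vanish on $\N^d$: argue by induction on the number of variables, using that a nonzero univariate polynomial has finitely many roots, as in the appendix argument for~\eqref{eq:generic-product}. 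This contradiction shows that some component meeting $\mathsf{A}^{-1}(\vc z)$ has $K_{P_j}=K_{\mathsf{A}}$.

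The main obstacle is Step 2. There one must check that the minimal-element decomposition really confines each component's fiber part to finitely many translates of $K_{P_j}$, rather than of some larger space. This holds exactly because differences of comparable solutions are nonnegative vectors in $Z_{P_j}$, and $K_{P_j}$ is by definition the span of $\mathsf{P}_j Z_{P_j}$. The remaining steps are routine linear algebra.
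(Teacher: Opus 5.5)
Your proposal is correct and follows the paper's proof essentially step for step: Dickson's lemma confines each component's part of the fiber to finitely many translates $\vc a_j+\mathsf{P}_j\vc c+K_{P_j}$, the cancellation grid $\vc n_{\vc z}+\sum_i\N\vc k_i$ lies in $S\cap\mathsf{A}^{-1}(\vc z)$, and a covering argument gives the contradiction. The only difference is in that last step, and it is minor: the paper counts that each proper translate contains at most $(N+1)^{d-1}$ points of $\{0,\ldots,N\}^d$, while you multiply nonconstant affine forms and use that a nonzero polynomial cannot vanish on all of $\N^d$; either argument finishes the proof.
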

\begin{proof}
First fix one component $\vc{a}+P^*$ and put
\[
 E_{\vc{b}}=\{\vc{n}\in\N^\ell:\mathsf{A}\mathsf{P}\vc{n}=\vc{b}\},\qquad \vc{b}=\vc{z}-\mathsf{A}\vc{a}.
\]
By Lemma~\ref{lem:dickson}, $E_{\vc{b}}$ has finitely many minimal elements $\vc{c}$. Every $\vc{n}\in E_{\vc{b}}$ is above one of them, and $\vc{n}-\vc{c}\in Z_P$; conversely, $\vc{c}+Z_P\subseteq E_{\vc{b}}$. Hence
\[
 E_{\vc{b}}=\bigcup_{\vc{c}\text{ minimal in }E_{\vc{b}}}(\vc{c}+Z_P).
\]
It follows that $(\vc{a}+P^*)\cap \mathsf{A}^{-1}(\vc{z})$ lies in finitely many translates $\vc{a}+\mathsf{P}\vc{c}+K_P$.

Choose $\vc{n}_{\vc{z}}\in\N^{2d}$ with $\mathsf{A}\vc{n}_{\vc{z}}=\vc{z}$. For each of the $d$ coordinates of $\Q^d$, let $\vc{k}_i$ have a $1$ in its positive and negative source positions and zeros elsewhere. The vectors $\vc{k}_1,\ldots,\vc{k}_d$ are a basis of $K_{\mathsf{A}}$, and
\[
 \vc{n}_{\vc{z}}+\sum_{i=1}^d\N\vc{k}_i\subseteq S\cap \mathsf{A}^{-1}(\vc{z}).
\]
If every component meeting this fiber had $K_{P_j}\ne K_{\mathsf{A}}$, the preceding description would cover this grid by finitely many translates of proper subspaces of $K_{\mathsf{A}}$. In the coordinates given by $\vc{k}_1,\ldots,\vc{k}_d$, each translate satisfies an equation $c_1t_1+\cdots+c_dt_d=b$ in which some $c_i$ is nonzero. It contains at most $(N+1)^{d-1}$ points of $\{0,\ldots,N\}^d$: fix all but a coordinate with nonzero coefficient, and the remaining coordinate has at most one possible value. Finitely many such translates cannot cover all $(N+1)^d$ points for arbitrarily large $N$.
\end{proof}

For a period set $P$, let $P_0$ consist of the periods $\vc{t}_i$ for which some $\vc{n}\in Z_P$ has $n_i>0$. These are the periods that occur in a nonnegative combination with image zero. Put
\[
 \mathcal M=\mathsf{A}(P^*),\qquad \Gamma=\mathsf{A}(P_0^*),\qquad
 L=\Span \Gamma,\qquad R=\Span P_0.
\]
Here $\mathcal M,\Gamma\subseteq\Z^d$, $L\le\Q^d$, and $R\le\Q^{2d}$. Both $\mathcal M$ and $\Gamma$ contain zero and are closed under addition.

\begin{lemma}\label{lem:reversible}
The set $\Gamma$ is closed under negation and $\mathsf{A}(R)=L$. If $K_P=K_{\mathsf{A}}$, then $K_{\mathsf{A}}\subseteq R$.
\end{lemma}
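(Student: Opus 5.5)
The first claim, closure of $\Gamma$ under negation, comes from a single zero-image combination that uses every period of $P_0$. For each $\vc{t}_i\in P_0$, pick $\vc{n}^{(i)}\in Z_P$ with $n^{(i)}_i>0$, and let $\vc{N}$ be the sum of these vectors. Since $Z_P$ is closed under addition, $\vc{N}\in Z_P$, and $N_i\ge1$ for every index $i$ with $\vc{t}_i\in P_0$. Any other coordinate of $\vc{N}$ is zero, because such an index is, by definition, never positive in an element of $Z_P$.

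Now take $\vc{g}=\mathsf{A}\mathsf{P}\vc{m}\in\Gamma$, where $\vc{m}$ is supported on $P_0$. Choose an integer $c\ge1$ with $c\vc{N}\ge\vc{m}$ coordinatewise; this is possible because $\vc{N}$ is positive on the support of $\vc{m}$. The vector $c\vc{N}-\vc{m}$ lies in $\N^\ell$, is supported on $P_0$, and satisfies
\[
\mathsf{A}\mathsf{P}(c\vc{N}-\vc{m})=c\cdot\vc 0-\vc{g}=-\vc{g}.
\]
So $-\vc{g}\in\Gamma$. This nonnegativity bookkeeping is the only step with real content.

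For $\mathsf{A}(R)=L$, linearity of $\mathsf{A}$ gives $\mathsf{A}(\Span P_0)=\Span\mathsf{A}(P_0)$. The latter equals $\Span\Gamma$, since $\Gamma$ consists of the nonnegative integer combinations of $\mathsf{A}(P_0)$ and contains each $\mathsf{A}\vc{t}_i$. The negation closure just proved is not needed here.

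For the last claim, assume $K_P=K_{\mathsf{A}}$. By definition, $K_P$ is spanned by the vectors $\mathsf{P}\vc{n}$ with $\vc{n}\in Z_P$. Each such $\vc{n}$ is supported on indices of periods in $P_0$, again by the definition of $P_0$. Hence $\mathsf{P}\vc{n}$ is a nonnegative combination of elements of $P_0$ and lies in $R$. Therefore $K_{\mathsf{A}}=K_P\subseteq R$.
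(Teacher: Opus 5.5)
Your proof is correct and follows the paper's argument, including the same treatment of $\mathsf{A}(R)=L$ and of $K_{\mathsf{A}}=K_P\subseteq R$. The only difference is cosmetic: you merge the witnesses into one vector $\vc{N}\in Z_P$ that is positive on all of $P_0$ and negate an arbitrary element of $\Gamma$ directly, whereas the paper negates each generator $\mathsf{A}\vc{t}_i$ using its own witness and then relies on closure of $\Gamma$ under addition.
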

\begin{proof}
For $\vc{t}_i\in P_0$, choose $\vc{n}\in Z_P$ with $n_i>0$. The relation $\sum_h n_h\mathsf{A}\vc{t}_h=\vc{0}$ gives
\[
 -\mathsf{A}\vc{t}_i=(n_i-1)\mathsf{A}\vc{t}_i+\sum_{h\ne i}n_h\mathsf{A}\vc{t}_h.
\]
Every period with a positive coefficient on the right belongs to $P_0$. Thus the negative of each generator of $\Gamma$ belongs to $\Gamma$, which proves closure under negation. Also $\mathsf{A}(R)=\Span \mathsf{A}(P_0)=\nobreak L$. In every $\mathsf{P}\vc{n}$ with $\vc{n}\in Z_P$, only periods in $P_0$ have positive coefficients, so $K_P\subseteq R$.
\end{proof}
Since $\Gamma\subseteq \mathcal M$, adding or subtracting an element of $\Gamma$ preserves the image $\mathsf{A}\vc{a}+\mathcal M$ of the linear component. These moves will turn a failure of separation into a point of $\mathsf{A}\vc{a}+\mathcal M$ lying in a deleted subspace.

\subsection{Two forms from the order of period supports}
The \emph{support} of a period $\vc t$ is the set $\{i:t_i\ne0\}$ of positions of its nonzero coordinates. We next use the prohibition on alternating period supports to find $u,v$ with $V(u,v)\subseteq L$. The argument follows the structure of stratified period sets studied by Brough \cite[Lemmas~3.6--3.7 and the proof of Proposition~3.8]{Brough}. We first record the graph property we need.

\begin{lemma}\label{lem:ordered-graph}
Let a graph have linearly ordered vertices and no edges $\{a,c\},\{b,d\}$ with $a<b<c<d$. Distinct connected components cannot contain $a,c$ and $b,d$, respectively, with $a<b<c<d$.
\end{lemma}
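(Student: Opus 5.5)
We argue by contradiction, taking a path in each component and applying the edge condition to crossing edges. Suppose distinct connected components $C\ne C'$ contain $a,c\in C$ and $b,d\in C'$ with $a<b<c<d$. Choose a path $a=w_0,w_1,\ldots,w_m=c$ in $C$. No vertex of $C'$ lies on this path, since the components are distinct. We look for an edge $\{w_{h},w_{h+1}\}$ of the path that ``jumps over'' $b$, meaning one endpoint is below $b$ and the other above it. Such an edge exists: $w_0=a<b<c=w_m$, and no $w_h$ equals $b$, so the sign of $w_h-b$ changes somewhere along the path. Call this edge $\{a',c'\}$ with $a'<b<c'$, both endpoints in $C$.

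Next we compare $c'$ with $d$. Since $d\in C'$, we have $c'\ne d$. If $c'<d$, then $a'<b<c'<d$. Now choose a path in $C'$ from $b$ to $d$ and find an edge of it jumping over $c'$. This is possible because $b<c'<d$ and $c'\notin C'$. Call that edge $\{b',d'\}$ with $b'<c'<d'$. We then need $a'<b'$ in order to obtain the forbidden pattern $a'<b'<c'<d'$ with edges $\{a',c'\},\{b',d'\}$. This need not hold directly, so the argument should instead be organized by a minimality (extremal) choice.

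\textbf{Refined plan.} Among all quadruples $a<b<c<d$ with $a,c\in C$ and $b,d\in C'$ (for any pair of distinct components, or with the roles of $C$ and $C'$ swapped, i.e.\ patterns $x<y<z<w$ alternating between the two components), choose one minimizing $d-a$, or minimizing the number of vertices in $[a,d]$. Then show that $\{a,c\}$ and $\{b,d\}$ must be edges. Take the path from $a$ to $c$ in $C$. If some vertex of it lies outside $[a,d]$, or if the path has an edge jumping over $b$ whose endpoints, together with $b$ and $d$ or a suitable pair, give a smaller alternating quadruple, then minimality is contradicted. Here the case analysis is the following. Let $\{p,q\}$ be an edge of the path with $p<b<q$. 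If $a<p$ or $q<c$, the quadruple $(p,b,q,d)$ (when $q<d$) is alternating and strictly narrower. If $q>d$, then $(b,\ ?,\ d,q)$ reversed gives the alternating pattern $p<b<d<q$ with components $C',C$ in positions two and four, starting from $p\ge a$. This is narrower unless $p=a$, and then we apply the symmetric argument to $C'$ on the interval $[b,q]$. Minimality thus forces the jumping edge to be exactly $\{a,c\}$, and symmetrically $\{b,d\}$ to be an edge, which contradicts the hypothesis.

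The main obstacle is making the extremal measure decrease in every case, especially when jumping edges overshoot beyond $d$ or start before $a$. I would handle this by minimizing the span over the alternating quadruples of both orientations together, using the count of vertices in the closed interval as the measure. I would verify that each replacement quadruple lies inside the original interval or is strictly shorter. If overshooting breaks this, the fallback is induction on the number of vertices: contract or delete a vertex outside all four positions while preserving both components' connectivity along chosen paths.
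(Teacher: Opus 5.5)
\textbf{There is a genuine gap: your proposal does not reach a proof.} Your first attempt stalls where you say it does, since nothing guarantees $a'<b'$. The refined extremal plan does not repair this.
\begin{itemize}
\item In the case $q>d$, the quadruple $p<b<d<q$ has components $C,C',C',C$. That is a nested pattern, not an alternating one, so it is not a competitor in your minimization. Also, $p\ge a$ is not given.
\item When $p\le a$ and $q<c$, the quadruple $(p,b,q,d)$ is alternating but has span $d-p\ge d-a$, so the measure does not decrease.
\item The fallback of contracting or deleting vertices is unjustified. A contraction can create crossing edges, and it is unclear where a merged vertex sits in the order.
\end{itemize}
So the step that actually produces a forbidden pair of edges is missing.

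\textbf{The missing idea is a nesting property, which needs no extremal choice.} If $\{i,k\}$ is an edge of one component with $i<k$, then every other component that meets the open interval $(i,k)$ lies entirely inside it. To see this, take a path in the other component from a vertex inside $(i,k)$ to a vertex outside. It contains an edge with one endpoint in $(i,k)$ and the other outside $[i,k]$. Neither endpoint can be $i$ or $k$, since those lie in the first component. So that edge alternates with $\{i,k\}$, which is forbidden.

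\textbf{With this property, your first attempt closes at once.}
\begin{itemize}
\item The edge $\{a',c'\}$ of $C$ jumping over $b$ gives $C'\subseteq(a',c')$. In particular $d<c'$, so your case $c'<d$ never occurs.
\item Take a $b$--$d$ path in $C'$. Since $b<c<d$ and $c\notin C'$, it has an edge $\{b',d'\}$ with $b'<c<d'$. Because $b',d'\in C'$, this edge satisfies $a'<b'<d'<c'$.
\item The component $C$ meets $(b',d')$ at $c$, so the same property gives $C\subseteq(b',d')$. This contradicts $a'\in C$ with $a'<b'$.
\end{itemize}
This is exactly the paper's argument.
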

\begin{proof}
If $\{i,k\}$ is an edge with $i<k$, any other component meeting the open interval $(i,k)$ lies entirely inside that interval. Otherwise a path leaving the interval has an edge with one endpoint inside and one outside. Its endpoints are distinct from $i,k$, so the two edges have alternating endpoints.

Suppose distinct components $C,D$ contain $a,c$ and $b,d$ with $a<b<c<d$. A path in $C$ from $a$ to $c$ has an edge $\{i,k\}$ with $i<b<k$, so $D\subseteq(i,k)$. A path in $D$ from $b$ to $d$ has an edge $\{j,\ell\}$ with $j<c<\ell$, so likewise $C\subseteq(j,\ell)$. But $i<j<\ell<k$, contradicting $i\in C$.
\end{proof}

\begin{lemma}\label{lem:period-forms}
If a subspace $R\le\Q^{2d}$ is spanned by a stratified period set and contains $K_{\mathsf{A}}$, there are nonzero rational forms $u$ on $\Q^r$ and $v$ on $\Q^r$ with $V(u,v)\subseteq \mathsf{A}(R)$.
\end{lemma}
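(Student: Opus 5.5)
The plan is to decompose $R$ along the connected components of the support graph of the stratified period set, and then use Lemma~\ref{lem:ordered-graph} to show that at most one component can carry a nontrivial linear constraint in each of the two entries.

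\textbf{Setting up the graph.} Let $P$ be a stratified period set spanning $R$. Form the graph $G$ on the $4r$ coordinate positions, ordered as in Section~\ref{sec:geometry}, with an edge $\{a,b\}$ for every period whose support is $\{a,b\}$. By the stratification condition, $G$ has no alternating edges, so Lemma~\ref{lem:ordered-graph} applies to it. Every period is supported inside a single component $C$ of $G$. Hence
\[
 R=\bigoplus_C R_C,\qquad R_C=\Span\{\vc t\in P:\operatorname{supp}\vc t\subseteq C\}\le\Q^C.
\]
Moreover, the projection of any vector of $R$ onto $\Q^C$ lies in $R_C$.

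\textbf{Using $K_{\mathsf A}\subseteq R$.} Consider an index $i$ of the $x$ block, with positions $i^+$ and $r+i=i^-$. Suppose $i^+$ and $i^-$ lie in different components. Projecting $\vc k_i\in R$ onto the component of $i^+$ gives $\vc e_{i^+}\in R$, and similarly $\vc e_{i^-}\in R$. Thus $\mathsf A(R)$ contains the unit vector of coordinate $x_i$.

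Now suppose instead that $i^+$ and $i^-$ are connected. Take any two such $x$-indices $i<j$. Their positions satisfy $i^+<j^+<i^-<j^-$, so Lemma~\ref{lem:ordered-graph} forces them into the same component. Therefore all connected $x$-pairs lie in a single component $C_x$, if there are any. The same argument produces a single component $C_y$ for the $y$ block, and it is possible that $C_x=C_y$.

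\textbf{Codimension count.} For a connected component $C$, choose a spanning tree of $G[C]$ and take one period for each tree edge. These $|C|-1$ periods are linearly independent: peel off a leaf, which lies in the support of exactly one of the remaining vectors. Hence $R_C$ has codimension at most one in $\Q^C$. It follows that $\mathsf A(R_C)$ has codimension at most one in $\mathsf A(\Q^C)$, the span of the unit vectors $\vc e_m$ for the indices $m$ touching $C$. So either $\mathsf A(R_C)=\mathsf A(\Q^C)$, or $\mathsf A(R_C)=\ker w_C\cap\mathsf A(\Q^C)$ for a single nonzero form $w_C$ on those coordinates.

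Split $w_C=w_C^x+w_C^y$ into its parts on the $x$ and $y$ coordinates. The key observation is that
\[
 \ker w_C^x\times\ker w_C^y\subseteq\ker w_C .
\]
This is what handles a component that mixes $x$ and $y$ positions, which is allowed because an $x$-pair and a $y$-pair never alternate.

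\textbf{Assembling $u$ and $v$.} Every index touching a component other than $C_x$ and $C_y$ is of the split kind. By the second paragraph, its unit vector already lies in $\mathsf A(R)$. The same holds for indices that touch $C_x$ or $C_y$ through only one of their two positions.

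Define $u$ on $\Q^r$ to be $w_{C_x}^x$, extended by zero to the other $x$ coordinates. If this is zero, because $C_x$ does not exist, has full image, or $w_{C_x}^x=0$, let $u$ be an arbitrary nonzero form. Define $v$ symmetrically from $C_y$.

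It remains to check $\ker u\times\ker v\subseteq\mathsf A(R)=\sum_C\mathsf A(R_C)$. Decompose a vector of $\ker u\times\ker v$ by coordinates: its part on the indices of $C_x$ and $C_y$, and the remaining unit-vector part. The remaining part lies in $\mathsf A(R)$ by the split-index argument. The part on $C_x$ and $C_y$ lies in $\mathsf A(R_{C_x})+\mathsf A(R_{C_y})$ by the observation above, applied once if $C_x=C_y$ and separately otherwise.

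\textbf{Main obstacle.} I expect the hardest step to be this last bookkeeping. It needs care when $C_x$ contains $y$ positions whose pairs are split, and when an index's positions are distributed among several components. The point to verify in those cases is that the split-index unit vectors let us restrict attention to the indices whose pairs lie in $C_x$ or $C_y$, without losing the containment in $\ker w_C$.
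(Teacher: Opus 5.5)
Your strategy is the primal form of the paper's argument and is sound up to the last step. That step, which you flag yourself, is a genuine gap as written. The paper works dually with the space $\mathcal H$ of forms vanishing on $R$, all of shape $(\vc{\alpha},-\vc{\alpha},\vc{\beta},-\vc{\beta})$. It uses the same two ingredients you do: each graph component carries at most one constraint, and by Lemma~\ref{lem:ordered-graph} the $x$-pairs that can carry a constraint lie in one component, likewise for $y$.

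After removing the split indices, you assert that the part of $(\vc x,\vc y)\in\ker u\times\ker v$ on the connected indices lies in $\mathsf A(R_{C_x})+\mathsf A(R_{C_y})$ ``by the observation above''. That observation applies only to vectors already in $\ker w^x_C\times\ker w^y_C$, and you have not shown this. A priori $u=w^x_{C_x}$ could have nonzero coefficients at split $x$-indices with one position in $C_x$. Then $u(\vc x)=0$ would not give $w^x_{C_x}(\vc x_I)=0$ for the restriction $\vc x_I$ of $\vc x$ to the connected $x$-indices. Also, when $C_x\ne C_y$, nothing in your choice of $v$ controls $w^y_{C_x}$, which could a priori be nonzero at split $y$-indices with a position in $C_x$.

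The missing fact is one line: $w_C$ has zero coefficient at every split index $m$ touching $C$. If $p\in C$ is a position of $m$, your projection argument gives $\vc e_p\in R\cap\Q^C=R_C$. Hence $\vc e_m=\pm\mathsf A\vc e_p\in\mathsf A(R_C)\subseteq\ker w_C$. So $w_C$ is supported on indices with both positions in $C$. In particular $w^y_{C_x}=0$ and $w^x_{C_y}=0$ when $C_x\ne C_y$, and $u$ is supported on connected $x$-indices. Your decomposition then goes through in both cases.

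You can also sidestep the issue. Write $\ker u\times\ker v=(\ker u\times\{\vc{0}\})+(\{\vc{0}\}\times\ker v)$, and put into the first piece the restriction $\vc x'$ of $\vc x$ to all $x$-indices touching $C_x$. Then $(\vc x',\vc{0})\in\mathsf A(\Q^{C_x})$, and $w_{C_x}(\vc x',\vc{0})=w^x_{C_x}(\vc x')$ equals $u(\vc x)=0$ (or is trivially $0$ if $w^x_{C_x}=0$). The remaining $x$-indices do not touch $C_x$ and are split.

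This fact is precisely the primal counterpart of the paper's step that a coefficient which can be nonzero in $\mathcal H$ has its signed partner in the same component. In the dual setting it is immediate: restricting a form in $\mathcal H$ to one component keeps it in $\mathcal H$, hence keeps opposite coefficients at signed partners.
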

\begin{proof}
Let $\mathcal H$ be the space of coefficient vectors of rational linear forms on $\Q^{2d}$ that vanish on $R$. Since these forms vanish on $K_{\mathsf{A}}$, every vector in $\mathcal H$ can be written as
\begin{equation}\label{eq:opposite-coefficients}
 \vc{w}=(\vc{\alpha},-\vc{\alpha},\vc{\beta},-\vc{\beta}),\qquad
 \vc{\alpha}\in\Q^r,\quad\vc{\beta}\in\Q^r.
\end{equation}
Construct a graph on the source coordinate positions, with an edge for each two-element support of a spanning period. For a period $\vc{t}$ supported on $\{i,j\}$, vanishing means
\[
 t_iw_i+t_jw_j=0,\qquad t_i,t_j\ne0.
\]
Along an edge one coefficient determines the other. On a connected component, all coefficients are therefore determined by any one of them. The space of restrictions of $\vc{w}\in\mathcal H$ to a component therefore has dimension at most one. This holds for isolated vertices as well.

Restricting $\vc{w}$ to one graph component and setting its other coefficients to zero preserves membership in $\mathcal H$, because each spanning period has support in a single component. Call the two positions $x_i^+,x_i^-$, or $y_i^+,y_i^-$, \emph{signed partners}. If a coefficient can be nonzero in $\mathcal H$, its position and its signed partner must belong to the same component. Otherwise, choose $\vc{w}\in\mathcal H$ nonzero at that position and restrict $\vc{w}$ to that graph component; the resulting vector violates~\eqref{eq:opposite-coefficients}.

All positions in the two $\vc{x}$ blocks whose coefficients can be nonzero belong to one component. Indeed, two indices $i<j$ in distinct components would put $x_i^+,x_i^-$ in one and $x_j^+,x_j^-$ in the other. These four positions alternate in the source order, contrary to Lemma~\ref{lem:ordered-graph}. The same holds for the $\vc{y}$ blocks. Thus the possible vectors $\vc{\alpha}$ span a space of dimension at most one, as do the possible vectors $\vc{\beta}$.

If the space of possible $\vc{\alpha}$ is nonzero, choose a form $u$ whose coefficient vector spans it; otherwise choose any nonzero form on $\Q^r$. Choose $v$ in the same way from the possible $\vc{\beta}$. Every rational form on $\Q^d$ that vanishes on $\mathsf{A}(R)$, when composed with $\mathsf{A}$, has a coefficient vector in $\mathcal H$. The form on $\Q^d$ can therefore be written as
\[
 (\vc{x},\vc{y})\longmapsto c\,u(\vc{x})+e\,v(\vc{y})\qquad(c,e\in\Q),
\]
and vanishes on $V(u,v)$. This implies $V(u,v)\subseteq \mathsf{A}(R)$: if a vector were outside $\mathsf{A}(R)$, extending a basis of $\mathsf{A}(R)$ would give a form vanishing there but not at that vector.
\end{proof}

\subsection{A stratification supplies separating forms}
For a component of a stratification with $K_P=K_{\mathsf{A}}$, the preceding lemmas give forms with $V(u,v)\subseteq L$. If these forms fail to separate a rational pair $\vc w\in Y$ from $D_i$, then $\vc w\in D_i+L$. To turn the $L$ contribution into an element of $\Gamma$ that can be subtracted within the component's image, we clear denominators simultaneously for all candidate components before applying Lemma~\ref{lem:fibers} to select one representing the scaled pair.

\begin{lemma}\label{lem:necessity}
If $S$ is stratifiable, finitely many separators separate every pair in $Y$.
\end{lemma}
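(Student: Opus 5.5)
Fix a stratification $S=\bigcup_{j=1}^t(\vc a_j+P_j^*)$ and let $G$ be the set of indices $j$ with $K_{P_j}=K_{\mathsf{A}}$. For each $j\in G$, Lemma~\ref{lem:reversible} gives $K_{\mathsf{A}}\subseteq R_j=\Span (P_j)_0$. The subset $(P_j)_0$ of a stratified set is stratified, so Lemma~\ref{lem:period-forms} yields nonzero forms $u_j,v_j$ with $V(u_j,v_j)\subseteq \mathsf{A}(R_j)=L_j$. I claim the finite list $\{u_j\times v_j:j\in G\}$ separates every pair in $Y$. (If $Y\cap\Z^d$ is nonempty then $G$ is nonempty by Lemma~\ref{lem:fibers}. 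If $Y=\emptyset$, the empty list works. Since $Y$ is closed under nonzero scaling, $Y\ne\emptyset$ forces $Y\cap\Z^d\ne\emptyset$.)

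Suppose, for contradiction, that $\vc w\in Y$ is separated by no $u_j\times v_j$. For each $j\in G$ and each $i$ with $\vc w\in D_i+V(u_j,v_j)$, we have $\vc w\in D_i+L_j$. Write $\vc w=\vc g_{ij}+\sum_h c_{ijh}\mathsf{A}\vc t_h$ with $\vc g_{ij}\in D_i$, rational $c_{ijh}$, and $\vc t_h\in(P_j)_0$. Since $\Gamma_j$ is a group by Lemma~\ref{lem:reversible}, any integer combination of the $\mathsf{A}\vc t_h$ lies in $\Gamma_j$. Choose a positive integer $N$ clearing the denominators of $\vc w$ and of all the $c_{ijh}$ simultaneously. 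Then $\vc z=N\vc w\in Y\cap\Z^d$, because $Y$ is a complement of subspaces.

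By Lemma~\ref{lem:fibers}, some component $j\in G$ meets $\mathsf{A}^{-1}(\vc z)$, so $\vc z\in\mathsf{A}\vc a_j+\mathcal M_j$. Since $\vc w$ is not separated by $u_j\times v_j$, it fails against some $D_i$, and the expression for $(i,j)$ is available. Then $\vc z-\gamma=N\vc g_{ij}\in D_i$, where $\gamma=\sum_h Nc_{ijh}\mathsf{A}\vc t_h\in\Gamma_j$.

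It remains to see that $\vc z-\gamma$ lies in $\mathsf{A}(\vc a_j+P_j^*)$. This is the step I expect to need the most care. Write $\vc z=\mathsf{A}(\vc a_j+\mathsf{P}_j\vc n)$, and write $-\gamma=\mathsf{A}\mathsf{P}_j\vc m$ with $\vc m\in\N^\ell$, which is possible because $-\gamma\in\Gamma_j\subseteq\mathsf{A}(P_j^*)$. Then $\vc a_j+\mathsf{P}_j(\vc n+\vc m)$ is a point of the component, hence of $S$, with image $N\vc g_{ij}\in D_i$. This contradicts $S=\mathsf{A}^{-1}(Y)\cap\N^{4r}$.

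The main subtlety is ordering the choices: the expressions and $N$ must be fixed for all candidate components before the component is selected via Lemma~\ref{lem:fibers}. The selected component need not be the one that determined a particular denominator, so every candidate's denominators must already be cleared. Finally, the list is finite and independent of $\vc w$, which proves the lemma.
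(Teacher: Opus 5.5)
Your proof is correct and takes essentially the same route as the paper's. You select the components with $K_{P_j}=K_{\mathsf{A}}$, obtain $u_j,v_j$ from Lemma~\ref{lem:period-forms}, and clear all denominators for every candidate component before invoking Lemma~\ref{lem:fibers}; you then subtract the resulting element of $\Gamma_j$ within the component's image. The only cosmetic differences are that you argue by contradiction and record expressions only when $\vc w\in D_i+V(u_j,v_j)$, whereas the paper records them whenever $\vc w\in D_i+L_j$ and concludes directly that $\vc w\notin D_i+L_j$ for the selected $j$.
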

\begin{proof}\proofback{lem:finite-necessity}
Fix a stratification $S=\bigcup_{j=1}^t(\vc{a}_j+P_j^*)$, and let
\[
 J=\{j:K_{P_j}=K_{\mathsf{A}}\}.
\]
For each $j\in J$, use the notation $\mathcal M_j,\Gamma_j,L_j,R_j$ defined above. Lemma~\ref{lem:reversible} gives $K_{\mathsf{A}}\subseteq R_j$ and $\mathsf{A}(R_j)=L_j$. Since the periods spanning $R_j$ are a subset of $P_j$, they are stratified. Lemma~\ref{lem:period-forms} supplies nonzero rational forms $u_j,v_j$ with
\begin{equation}\label{eq:kernel-inclusion}
 V(u_j,v_j)\subseteq L_j.
\end{equation}
We show that these finitely many separators separate all of $Y$.

Fix $\vc{w}\in Y$. For every index combination $(j,i)$ with $j\in J$ and $\vc{w}\in L_j+D_i$, choose an expression
\[
 \vc{w}=\vc{d}_{ji}+\sum_h c_{jih}\vc{h}_{jih},\qquad
 \vc{d}_{ji}\in D_i,\quad\vc{h}_{jih}\in \Gamma_j,\quad c_{jih}\in\Q.
\]
Such expressions exist because $L_j=\Span \Gamma_j$. The fiber lemma will select a component only after we fix an integer multiple of $\vc w$, so we clear denominators simultaneously for every candidate component. Choose one positive integer $N$ clearing the denominators of the coordinates of $\vc{w}$ and all the coefficients $c_{jih}$. As $\Gamma_j$ is closed under addition and negation, while $D_i$ is a subspace, we obtain
\begin{equation}\label{eq:clear-pair}
 N\vc{w}\in \Gamma_j+D_i\quad\text{whenever }\vc{w}\in L_j+D_i.
\end{equation}
Also $N\vc{w}\in Y\cap\Z^d$, since multiplication by a nonzero scalar preserves membership in each $D_i$ and its complement.

Lemma~\ref{lem:fibers} gives $j\in J$ with $N\vc{w}\in \mathsf{A}\vc{a}_j+\mathcal M_j$. If $\vc{w}\in L_j+D_i$, equation~\eqref{eq:clear-pair} gives $N\vc{w}=\vc{h}+\vc{d}'$ for $\vc{h}\in \Gamma_j$ and $\vc{d}'\in D_i$. Since $-\vc{h}\in \Gamma_j\subseteq \mathcal M_j$,
\[
 \vc{d}'=N\vc{w}-\vc{h}\in \mathsf{A}\vc{a}_j+\mathcal M_j\subseteq Y,
\]
contradicting $\vc{d}'\in D_i$. Therefore $\vc{w}\notin L_j+D_i$ for every $i$. By~\eqref{eq:kernel-inclusion}, $\vc{w}\in Y_{u_j,v_j}$, which proves the claim.
\end{proof}

\subsection{Separating forms supply a stratification}\label{app:weighted-regions}
For a source vector $\vc{n}=(\vc{x}^+,\vc{x}^-,\vc{y}^+,\vc{y}^-)$ and a fixed separator $u\times v$, membership of $\mathsf{A}\vc{n}$ in $Y_{u,v}$ depends on the two sums $u(\vc{x}^+-\vc{x}^-)$ and $v(\vc{y}^+-\vc{y}^-)$. After clearing the coefficient denominators of $u,v$, these are integer-weighted sums over consecutive coordinate blocks. The following lemma proves that a Presburger condition on two such sums gives a context-free letter-bounded language.

\begin{samepage}
\begin{lemma}\label{lem:weighted}
Let $a_1,\ldots,a_s,b_1,\ldots,b_t$ be distinct letters, let $h_i,k_j\in\Z$, and let $T\subseteq\Z^2$ be Presburger-definable. The language of words
\[
 a_1^{n_1}\cdots a_s^{n_s}b_1^{m_1}\cdots b_t^{m_t}
 \quad\text{such that}\quad
 \left(\sum_i h_i n_i,\sum_j k_jm_j\right)\in T
\]
is context-free.
\end{lemma}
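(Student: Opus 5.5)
The plan is to build, for each piece of a suitable decomposition of $T$, a pushdown automaton whose stack holds the first weighted sum in unary, and to take a finite union at the end. Write $X=\sum_ih_in_i$ and $Y=\sum_jk_jm_j$. Context-free languages are closed under finite union and under intersection with regular languages. So I only need to express $T$ as a finite union of simple pieces, and check that each piece gives a context-free language once its congruence conditions are handled by the finite control.

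\emph{Step 1: normal form for $T$.} By quantifier elimination for Presburger arithmetic, $T\subseteq\Z^2$ is a finite Boolean combination of inequalities $aX+bY\ge c$ and congruences $aX+bY\equiv c\pmod m$. Put it in disjunctive normal form. Replace each negated congruence by a disjunction of congruences, and each negated inequality by an inequality. Then refine every conjunct by the vertical lines through the finitely many intersection points of its boundary lines, and by a guess of the sign of $X$. Each piece then has the form
\[
 \{(X,Y):X\in I,\ \ell_1(X)\le Y\le\ell_2(X),\ \text{finitely many congruences}\},
\]
where $I$ is an interval contained in $\N$ or in $-\N$, each $\ell_i$ is an affine function with rational slope, and either bound may be absent. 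After multiplying through by a common denominator, the slopes become integers $\lambda_1\le\lambda_2$ acting on a scaled copy of $X$.

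\emph{Step 2: the congruences and the interval for $X$.} A congruence on $aX+bY$ is a congruence on a weighted sum of all letter counts, so a finite automaton tracking residues recognizes it. Intersecting with this regular language handles every congruence. The condition $X\in I$ is checked at the boundary between the $a$-letters and the $b$-letters. At that point the stack holds $|X|$ in unary over a bottom marker recording the sign of $X$. The finite control can test $X\ge x_0$ or $X\le x_1$ for the fixed constants by popping and restoring a bounded number of symbols.

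\emph{Step 3: the two-sided linear condition, which I expect to be the main obstacle.} While reading the $a$-letters, the automaton maintains $X$ on the stack with cancellation. A letter of weight $h_i$ pushes or pops $|h_i|$ symbols, with a sign symbol under the counter, and bounded residues are carried in the finite control. The difficulty is checking $\lambda_1X\le Y\le\lambda_2X$ (scaled) when $Y$ also receives contributions of both signs. My first attempt uses the block structure $b_1^{m_1}\cdots b_t^{m_t}$: the sign of $k_j$ is constant on each block, so the running value of $Y$ changes direction at most $t$ times.

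I would nondeterministically guess the order in which the positive and negative $b$-blocks are charged. Each positive unit of $Y$ is matched against stack units of $X$: popping one $X$-unit nondeterministically absorbs between $\lambda_1$ and $\lambda_2$ units of $Y$, the same device that recognizes $\{a^xb^y:x\le y\le 2x\}$. Negative contributions of $Y$ are pushed as a signed surplus above the remaining $X$-units, and must be cancelled before further $X$-units are popped. The run accepts exactly when the stack is empty at the end, apart from bounded finite-control residue. The claim to verify is that this accepts exactly the words with $\lambda_1X\le Y\le\lambda_2X$, in particular that no interleaving of signed blocks lets the guessed allocations escape $[\lambda_1X,\lambda_2X]$.

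If this bookkeeping fails, the fallback is the route through Theorem~\ref{thm:GS}. In that route I would show directly that the exponent set $\{(\vc n,\vc m):(X,Y)\in T\}$ is stratifiable. For the cancellations within each entry I would use periods $|h_j|\vc e_i+h_i\vc e_j$, with $h_i>0>h_j$, and split into components so that no component contains alternating pairs. For the coupling between $X$ and $Y$ I would use periods joining one $a$-position to one $b$-position, since every semilinear subset of $\N^2$ is stratifiable.
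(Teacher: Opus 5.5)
\textbf{There is a genuine gap in Step 3.} Your Steps 1 and 2 are essentially sound. Step 3 carries the entire difficulty of the lemma, and you leave its correctness as ``the claim to verify''. As described, the mechanism is in fact incomplete, because positive units of $Y$ can only be absorbed by popping $X$-units, and only negative units are stored as surplus.

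Take $h_1=1$, $k_1=1$, $k_2=-1$ and $[\lambda_1,\lambda_2]=[1,2]$. The word $a_1^2b_1^6b_2^3$ has $X=2$ and $Y=3\in[2,4]$. Its six positive units arrive before any negative one, and two $X$-units absorb at most four of them, so every run gets stuck. If the $\lambda$ units absorbed after a pop must also be consecutive, a second example fails. Take weights $(k_1,k_2,k_3)=(1,-1,1)$, $\lambda_1=\lambda_2=3$, and the word $a_1b_1^2b_2b_3^2$. Here $Y=3X$, yet the running value $0,1,2,1,2,3$ of $Y$ never rises by three consecutive units.

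A repair would need all of the following:
\begin{itemize}
\item a surplus of either sign above the remaining $X$-units;
\item an interruptible absorption counter in the finite control;
\item an argument that some run stays synchronized whenever $\lambda_1X\le Y\le\lambda_2X$, for instance via last-visit times of the running value of $Y$;
\item separate treatment of negative slopes, $X<0$, and intercepts.
\end{itemize}
None of this is supplied.

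The fallback is also only a plan. With $h_1>0>h_3$, a cancellation period on $\{a_1,a_3\}$ alternates with any coupling period on $\{a_2,b_j\}$. Two coupling periods on $\{a_i,b_j\}$ and $\{a_{i'},b_{j'}\}$ with $i<i'$, $j<j'$ alternate as well. The component splitting you postpone is therefore the whole problem, not a detail.

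\textbf{How the paper avoids the bookkeeping.}
\begin{itemize}
\item Split $T$ into its four quadrant pieces; after sign changes each is a semilinear subset of $\N^2$.
\item Each linear component $\vc c+\sum_\ell\N(\alpha_\ell,\beta_\ell)$ gives the two-block language of the grammar $G\to\mathtt a^{\alpha_\ell}G\mathtt b^{\beta_\ell}\mid\mathtt a^{c_1}\mathtt b^{c_2}$.
\item Substitute for $\mathtt a$ and $\mathtt b$ the languages $\mathcal B_{\pm1}$ of all words of balance $\pm1$ over two disjoint copies of $\{+,-\}$, with $\mathcal B_0$ at two markers. The identity $\mathcal B_0(\mathcal B_\epsilon)^n=\mathcal B_{\epsilon n}$ then admits every interleaving of signs at once.
\item Take the inverse image under $a_i\mapsto(\operatorname{sign}h_i)^{|h_i|}$, $b_j\mapsto(\operatorname{sign}k_j)^{|k_j|}$, and intersect with $a_1^*\cdots a_s^*b_1^*\cdots b_t^*$.
\end{itemize}
The two-sided constraint lives in the periods of $T$ and the nesting of the grammar, so no stack discipline for mixed signs is ever needed. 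A repaired direct automaton would avoid citing closure under substitution and inverse homomorphism. However, the synchronization argument it requires is exactly what those closure properties provide for free.
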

\end{samepage}
\begin{proof}
Intersect $T$ with each of the four closed coordinate quadrants. Changing coordinate signs identifies each intersection with a semilinear subset of $\N^2$. For one of its linear components $\vc{c}+\sum_\ell\N(\alpha_\ell,\beta_\ell)$, choose fresh terminal letters $\mathtt{a},\mathtt{b}$, distinct from every $a_i,b_j$. The grammar
\[
 X\longrightarrow \mathtt{a}^{\alpha_\ell}X\mathtt{b}^{\beta_\ell}\quad(\text{each }\ell),
 \qquad X\longrightarrow \mathtt{a}^{c_1}\mathtt{b}^{c_2}
\]
generates exactly its two-block language. Thus each such linear component has a context-free two-block language.

On the alphabet $\{+,-\}$ define the \emph{balance} of a word to be the number of plus signs minus the number of minus signs. We will replace a count $n\in\N$ by words of balance $\epsilon n$, where $\epsilon\in\{1,-1\}$ is the sign assigned to that coordinate by the chosen quadrant. Let $\mathcal B_j$ be the language of words of balance $j$. A grammar for $\mathcal B_0$ is
\[
 Z\longrightarrow\varepsilon\mid +Z-Z\mid -Z+Z.
\]
The languages $\mathcal B_1=\mathcal B_0+\mathcal B_0$ and $\mathcal B_{-1}=\mathcal B_0-\mathcal B_0$ are also context-free, where juxtaposition denotes concatenation and the middle sign is a literal letter. For $\epsilon\in\{1,-1\}$ and $n\in\N$,
\begin{equation}\label{eq:balance}
 \mathcal B_0(\mathcal B_\epsilon)^n=\mathcal B_{\epsilon n}.
\end{equation}
For the reverse inclusion, cut a word when its balance first reaches $\epsilon$, then $2\epsilon$, and so on up to $n\epsilon$. Each intervening piece has balance $\epsilon$; append the remaining balanced suffix to the~last~piece. The case $n=0$ is immediate. The same first-passage argument gives the descriptions of $\mathcal B_1$ and $\mathcal B_{-1}$ above.

Insert new marker letters $\#_1,\#_2$ before the two count blocks: add the start production $X_0\to \#_1X$ and replace the terminal production by $X\to \mathtt{a}^{c_1}\#_2\mathtt{b}^{c_2}$. Use two disjoint copies of the signed alphabet. Substitute $\mathcal B_0$ for each marker, $\mathcal B_{\epsilon_1}$ for $\mathtt{a}$, and $\mathcal B_{\epsilon_2}$ for $\mathtt{b}$, where $\epsilon_1,\epsilon_2$ are the chosen signs. By \eqref{eq:balance}, the resulting language consists exactly of concatenations of a word over the first signed alphabet and a word over the second, whose balance pair is $(\epsilon_1n,\epsilon_2m)$ for some $(n,m)$ in the selected linear component. Take the union over components and coordinate-sign regions. Finally, for nonzero weights, map each $a_i$ to $|h_i|$ copies of the sign of $h_i$ in the first alphabet, and each $b_j$ to $|k_j|$ copies of the sign of $k_j$ in the second. Map each letter of weight zero to the empty word. Take the inverse image under this homomorphism and intersect it with the regular language $a_1^*\cdots a_s^*b_1^*\cdots b_t^*$. This is the required language. These constructions use closure of context-free languages under concatenation, finite union, substitution of context-free languages for letters, inverse homomorphism, and intersection with a regular language \cite[Sections~3.4, 6.4, and~6.5]{Harrison}.
\end{proof}

\begin{proof}[Proof of Lemma~\ref{lem:separated-region}]\proofback{lem:separated-region}
Multiplying each form by a positive integer to clear its coefficient denominators preserves separation: it applies the same invertible coordinate map to the image of each pair and each subspace. Membership in $S_{u,v}$ is therefore a Presburger condition on the two integer-weighted sums
\[
 u(\vc{x}^+-\vc{x}^-),\qquad v(\vc{y}^+-\vc{y}^-).
\]
The pair of sums must lie outside $\bigcup_i(u\times v)(D_i)$, a union of rational subspaces of $\Q^2$ defined by linear equations. All $\vc x$ positions precede all $\vc y$ positions. Lemma~\ref{lem:weighted} therefore shows that the associated letter-bounded language is context-free. Theorem~\ref{thm:GS} therefore makes $S_{u,v}$ stratifiable. Finite unions preserve stratifiability.
\end{proof}

\end{document}